\newtheorem{cor}{Corollary}
\newtheorem{lemma}{Lemma}
\newtheorem{rmk}{Remark}
\newcommand{\assign}{:=}
\newcommand{\dif}{\mathrm{d}}
\newcommand{\E}{\mathbb{E}}
\newcommand{\1}{\mathds{1}}
\newcommand{\R}{\mathbb{R}}
\newcommand{\PP}{\mathbb{P}}
\newcommand{\bmalpha}{\bm{\alpha}}
\newcommand{\eps}{\varepsilon}
\newcommand{\ra}{\rangle}
\newcommand{\la}{\langle}
\newcommand{\sGK}{\sigma^2}
\newcommand{\Ord}{\mathcal{O}}
\newcommand{\ds}{(\ref{eq.skewx})-(\ref{eq.skewy})}
\newcommand{\xxi}{{\rm{x}}}
\newcommand{\highlight}[1]{{#1}}
\begin{document}
	
	\title{Edgeworth expansions for slow-fast systems with finite time scale separation}
		
	\author{
		Jeroen Wouters$^{1,2}$
		Georg A.~Gottwald$^{3}$
	}
	
	\address{$^{1}$Department of Mathematics and Statistics, University of Reading, Reading, United Kingdom\\
		$^{2}$Niels Bohr Institute, University of Copenhagen, Copenhagen, Denmark\\
		$^{3}$School of Mathematics and Statistics, University of Sydney, NSW 2006, Australia}
	
	\subject{70K70, 65C20, 37A50, 60F05}
	
	\keywords{multi-scale systems, homogenization, Edgeworth expansion, stochastic limit systems}
	
	\corres{Jeroen Wouters\\
		\email{j.wouters@reading.ac.uk}}
	
\begin{abstract}
	We derive Edgeworth expansions that describe corrections to the Gaussian limiting behaviour of slow-fast systems. The Edgeworth expansion is achieved using a semi-group formalism for the transfer operator, where a Duhamel-Dyson series is used to asymptotically determine the corrections at any desired order of the time scale parameter $\eps$. The corrections involve integrals over higher-order auto-correlation functions. We develop a diagrammatic representation of the series to control the combinatorial wealth of the asymptotic expansion in $\eps$ and provide explicit expressions for the first two orders. At a formal level, the expressions derived are valid in the case when the fast dynamics is stochastic as well as when the fast dynamics is entirely deterministic. We corroborate our analytical results with numerical simulations and show that our method provides an improvement on the classical homogenization limit which is restricted to the limit of infinite time scale separation. 
\end{abstract}
	
\begin{fmtext}
\section{Introduction}
Many systems in the natural sciences feature a time scale separation between slowly and rapidly evolving variables. Examples range from molecular drug design where a protein interacts with a target molecule in the presence of a rapidly fluctuating environment \cite{kamerlin_coarse-grained_2011}, to climate dynamics where the slowly evolving ocean dynamics is driven by rapidly evolving atmospheric weather systems \cite{imkeller_stochastic_2001}. 

Such systems can often be modeled by multi-scale systems of the form
\end{fmtext}
\maketitle


\begin{align}
  \dif{x} & =  \frac{1}{\varepsilon} f_0 (x, y) \dif t + f_1(x, y) \dif t
   \label{eq.skewx}
  \\
  \dif{y} & =  \frac{1}{\varepsilon^2} g_0 (y) \dif t +  \frac{1}{\varepsilon} \beta(y) \dif W_t+  \frac{1}{\varepsilon} g_1 (x,y)  \dif t
 \label{eq.skewy}
 .
\end{align}
with $x\in \R^d$, $y\in \R^m$ and $\beta \in \R^{m\times l}$ and $l$-dimensional Wiener process $W_t$. The fast dynamics can be stochastic with $\beta\neq 0$ or deterministic with $\beta\equiv 0$. In the stochastic case we assume that the fast dynamics $\dif y = g_0 \dif t + \beta \dif W_t $ is an ergodic process with an absolutely continuous measure $\mu$ and the full system \eqref{eq.skewx}-\eqref{eq.skewy} admits an absolutely continuous measure $\mu^{(\eps)}$. In the purely deterministic case $\beta\equiv 0$ we assume that the fast dynamics $\dif y = g_0 \dif t$ admits a unique invariant physical measure $\mu$ on $\R^m$ and the full system \eqref{eq.skewx}-\eqref{eq.skewy} admits a unique invariant physical measure $\mu^{(\eps)}$ on $\R^{d+m}$ \footnote{An ergodic measure is called physical if for a set of initial conditions of nonzero Lebesgue measure the temporal average of a continuous observable converges to the spatial average over this measure.}. Here $\eps\ll 1$ denotes the degree of time scale separation between the slow and fast variables, $x$ and $y$, respectively. Often only the slow variables are of interest in such systems, and one seeks reduced equations for the slow dynamics only. 

In the limit of infinite time scale separation $\varepsilon \to 0$, closed diffusion equations for the slow variables can be obtained by the method of homogenization \cite{GivonEtAl04,Pavliotis08,Khasminsky66,Kurtz73,Papanicolaou76,Beck90,JustEtAl01}. The diffusive behaviour emerges as the integrated effect of the fast dynamics, reminiscent of the summation of random variables in the central limit theorem (CLT). This method applies to slow-fast systems, where the fast dynamics can be either stochastic or deterministic with $\beta\equiv 0$. In the deterministic case the fast dynamics needs to be sufficiently chaotic. If the leading order slow vector field averages to zero, i.e. if $\int f_0(x,y)\mu(\dif y) = 0$, the slow dynamics is approximated on time scales of $\Ord(1)$ by a stochastic differential equation; see \cite{Khasminsky66,Kurtz73,Papanicolaou76} for the stochastic case and \cite{MelbourneStuart11,GottwaldMelbourne13c,KellyMelbourne17} and \cite{Dolgopyat04,DeSimoiLiverani15,DeSimoiLiverani16} for the deterministic case. Homogenization has been used to design efficient numerical multi-scale integrators such as {equation-free} projection \cite{GearKevrekidis03,KevrekidisGearEtAl03} and the {heterogeneous multi-scale method} \cite{E03,EEtAl07}, and has been used for stochastic parameterization in the climate sciences \cite{MTV99,Majdaetal01,MTV02,Majda03,monahan_stochastic_2011,culina_stochastic_2011,GottwaldEtAl17}.\\


In realistic physical systems, however, the time scale separation is always finite. In systems without a clear time scale separation classical homogenization theory may fail and may not be able to reliably approximate the stochastic long-time behaviour of the slow dynamics. For finite values of $\varepsilon$ there is an intricate feedback between the evolution of the slow $x$ variables and the fast $y$ variables which prevents the approximation of the integrated fast dynamics by Brownian motion. Similar issues arise in the CLT, where the Gaussian distribution is only an accurate approximation for sums with a sufficiently large number of summands. For the CLT techniques exist to obtain a more accurate description of the distribution of finite sums than provided by the limiting Gaussian. A classical technique is the Edgeworth expansion, which provides an expansion of the distributions of sums, asymptotic in $1/\sqrt{n}$, where $n$ is the length of the sum \cite{BhattacharyaRanga,fellerbook}.\\ 

In the case of a multi-scale system such as (\ref{eq.skewx})-(\ref{eq.skewy}) the small parameter controlling the limit is now $\eps$, instead of $1/\sqrt{n}$, and the random variables converging to a Gaussian are the increments over time of the slow variable $x$. The multi-scale system (\ref{eq.skewx})-(\ref{eq.skewy}) features three distinct time scales \cite{simoi_fastslow_2017}: the fast time scale of $\Ord(\eps^2)$, an intermediate time scale of $\Ord(\eps)$ on which the slow dynamics is trivial but the fast dynamics has equilibrated, and the long diffusive time scale of $\Ord(1)$ on which the slow dynamics exhibits nontrivial diffusive behaviour. The corrections to Gaussianity occur on the intermediate time scale; it is sufficiently long for the integrated noise on $x$ to become nearly Gaussian, but not long enough for the slow dynamics to dominate. To focus on the statistical behaviour on the intermediate time scale we consider the transition probabilities of the slow variable $x$
\begin{equation*}
\pi_\eps(\xxi,t,x_0) = \PP \left( \left. \frac{x(t) - x(0) }{\sqrt{t}} \in (\xxi, \xxi + \mathrm{d} \xxi) \right| x(0) = x_0, \, y(0) \sim   \mu_{x_0}^{(\eps)}\right) ,
\end{equation*}
where $\mu_{x_0}^{(\eps)}$ is the invariant measure of  (\ref{eq.skewx})-(\ref{eq.skewy}) conditioned on $x=x_0$ and $x(t)$ the slow variable of a solution of the multi-scale system \ds.

Homogenization dictates that on the intermediate time scale and in the limit  $\eps \rightarrow 0$ of infinite time scale separation, $\pi_{\eps}(t=\eps)$ becomes Gaussian. We will refer to this convergence as the CLT in the context of slow-fast systems.
 For small but finite $\varepsilon$, the deviations from Gaussianity of $\pi_\eps$ will be small. We can therefore expand $\pi_{\eps} ( \xxi, \eps, x_0 ) $ in $\sqrt{\eps}$. This expansion is the equivalent of the classical Edgeworth expansion for slow-fast systems.

Whereas the limiting Gaussian probability, implied by homogenization theory, only involves the two-time statistics, higher-order Edgeworth expansions involve higher-order time correlations, containing more information about the dynamics. This constitutes our main result:
\subsection{Main result}
We assume that the fast dynamics of the multi-scale system (\ref{eq.skewx})-(\ref{eq.skewy}) decorrelates sufficiently rapidly, such that higher-order correlation functions are integrable and all expectation values appearing in the formulae below exist. Furthermore we assume that the conditional invariant measure $\mu^{(\eps)}_{x_0}$ obeys linear response w.r.t $\eps$, and $\int f_0 \, \mu (dy)=0$. 
Then the expansion is given up to ${\mathcal{O}}(\eps^\frac{3}{2})$ in the limit $\eps \rightarrow 0$ with $t= \eps\ll 1$ by 
\begin{align}
  &\pi_\eps(\xxi,t= \eps,x_0) = \mathbf{n}_{0,\sigma^2} (\xxi) \left( 1 
        + \sqrt{\eps} \left( \frac{c^{(1)}_{\frac{1}{2}}}{\sigma} H_1\left(\frac{\xxi}{\sigma}\right) + \frac{c^{(3)}_{\frac{1}{2}}}{3! \sigma^3} H_3\left(\frac{\xxi}{\sigma}\right) \right) \right. 
        \label{e.gamma_xi}  \\
         &\qquad + \left. \eps \left(\frac{ {c^{(2)}_{1}} + {{c^{(1)}_{\frac{1}{2}}}^2}}{2 \sigma^2}  H_2\left(\frac{\xxi}{\sigma}\right) + \frac{c^{(4)}_{1} + 4 {c^{(1)}_{\frac{1}{2}} c^{(3)}_{\frac{1}{2}}}}{4! \sigma^4}  H_4\left(\frac{\xxi}{\sigma}\right) + \frac{{c^{(3)}_{\frac{1}{2}}}^2}{2 (3! \sigma^3)^2} H_6\left(\frac{\xxi}{\sigma}\right) \right)
        \right) \nonumber\\ &\hspace{10cm}  + \Ord(\eps^{\frac{3}{2}}),
        \nonumber     
\end{align}
	with $c^{(p)}_{\frac{1}{2}} = c^{(p)}_{0,\frac{1}{2}} + c^{(p)}_{1,-\frac{1}{2}}$ and $c^{(p)}_1 = c^{(p)}_{0,1} + c^{(p)}_{1,0} + c^{(p)}_{2,-1} $,  with expressions given in equations (\ref{e.m1b}), (\ref{e.m2c}), (\ref{e.m2a}), (\ref{e.m2b}), (\ref{e.c3a}), (\ref{e.c3b}), (\ref{c4a}), (\ref{c4b}) and (\ref{c4c}), and where $H_n(\xxi) = (\xxi - \frac{\dif}{\dif \xxi})^n 1$ are Hermite polynomials of degree $n$. \\

Establishing this expansion involves expanding the first four cumulants $c^{(p)}$ of $(x(t)-x(0))/\sqrt{t}$ with $p\le 4$, resulting in their expansion coefficients $c^{(p)}_k$ with $k \in \lbrace \frac{1}{2}, 1 \rbrace$ (see Section \ref{sec.Econt}). These coefficients only involve the leading order measure $\mu_{x_0}^{(0)} = \mu$ and, in particular, do not involve the linear response correction to $\mu^{(\eps)}_{x_0}$. \highlight{It should be noted that the expressions for the cumulant expansion coefficients $c^{(p)}_{(i,j)}$ as derived below determine the functional form of the expansion, but are not sufficient to show that an Edgeworth expansion holds for a given class of dynamical systems.}\\

\subsection{Plan of the paper}
The paper is organized as follows. In Section~\ref{sec.homo} we briefly review homogenization theory highlighting the r\^ole of infinite time scale separation in convergence of transition probabilities to Gaussian distributions. Section~\ref{sec.edge} reviews Edgeworth expansions for discrete stochastic systems. Section~\ref{sec.Econt} introduces Edgeworth expansions and finite time scale separation corrections to the CLT for dynamical multi-scale systems and presents the explicit expression of the Edgeworth expansion; the lengthy and involved derivation is provided in the Appendix \ref{appendix}. In Section~\ref{sec.numerics} we present a numerical example corroborating our analytical main result for the transition probability and show that the Edgeworth expansion improves on the CLT. We conclude with a summary and an outlook in Section~\ref{sec.summary}. 


\section{Homogenization}
\label{sec.homo}
Homogenization describes the integrated effect of fast (either stochastic or deterministic chaotic) dynamics on slow variables as noise. Initially developed for stochastic multi-scale systems \cite{Khasminsky66,Kurtz73,Papanicolaou76}, homogenization has been extended recently to deterministic multi-scale systems when the fast dynamics evolves on a compact attractor $\Lambda \subset \R^m$ with an ergodic invariant measure $\mu$.  It was shown rigorously that for sufficiently chaotic fast dynamics the emergent stochastic long time behaviour of the slow dynamics is given by stochastic differential equations driven by Brownian motion \cite{Dolgopyat05,MelbourneStuart11,GottwaldMelbourne13c,KellyMelbourne17}. While homogenization for stochastic systems has been proven in a very general setting, the results in the deterministic case are so far limited to the skew product case with $g_1=0$. The assumptions on the chaoticity of the fast subsystem are mild, including Axiom A diffeomorphisms and flows, H\'enon-like attractors and Lorenz attractors.  
%

The following heuristic argument serves to show how the diffusive behaviour of the slow dynamics is linked to the CLT for time-integrated stationary processes in the case when the dynamical system is entirely deterministic. Consider the simplified version of the multi-scale system (\ref{eq.skewx})-(\ref{eq.skewy}) with $x\in \R$ and $y\in\R^m$
\begin{align*}
\dot x &= \frac{1}{\varepsilon}f_0(y)\\
\dot y &= \frac{1}{\varepsilon^2}g_0(y).
\end{align*}
Integrating the slow dynamics leads to
\begin{equation*}
x(t) = x(0) + \frac{1}{\varepsilon}\int_0^t f_0(y(s))\,\dif s.
\end{equation*}
Transforming the integrand to the fast time scale $\tau = s/\varepsilon^2$ we obtain 
\begin{equation*}
x(t) = x(0) + W_\eps(t),
\end{equation*}
where we introduced
\begin{equation}
W_\eps(t)=\varepsilon\int_0^{\frac{t}{\varepsilon^2} }f_0(y_0(\tau))\,\dif \tau,
\label{e.Weps}
\end{equation}
with rescaled fast dynamics $\dot{y}_0 = g_0(y_0)$. 
For $\int f_0(y)\mu(\dif y) = 0$ the integral is collecting weakly dependent variables with mean zero, provided the fast dynamics is sufficiently chaotic. The integral term in \eqref{e.Weps} is formally of the form of the CLT where $n=1/\eps^2$ terms are integrated and then scaled by $1/\sqrt{n}=\eps$, and, \highlight{assuming the CLT holds for $y$,} converges weakly to Brownian motion $W_t$ \highlight{on the long diffusive time scale $t={\mathcal{O}}(1)$}. The slow dynamics is approximated by
\begin{equation*}
\dif X = \sigma \dif W_t.
\end{equation*}
By explicitly calculating $\lim_{\eps\to 0} \int_\Lambda \mu(\dif y) W_\eps(t)^2 $, we obtain a Green-Kubo formula for the variance with
\begin{equation*}
\frac{1}{2}\sigma^2 = \int_0^\infty  \, \dif s \int_\Lambda \,  \mu(\dif y) \, f_0(y)f_0(\varphi^s y) ,
\end{equation*}
where $\varphi^s$ denotes the flow map of the fast dynamics $\dot{y}_0 = g_0(y_0)$.
Note that in the deterministic case, randomness is only introduced through the choice of the initial condition $y(0)$.\\

This heuristic argument can be made rigorous. It can be shown that a functional central limit theorem exists and solutions of the multi-scale system (\ref{eq.skewx})-(\ref{eq.skewy}) converge weakly to solutions the homogenized It\^o stochastic differential equation
\begin{equation}
d X = F(X) \dif t + \sigma(X) \, \dif W_t \, ,
\label{e.homog}
\end{equation}
where $W_t$ denotes $l$-dimensional Brownian motion.
The drift coefficient $F:\R^d\to\R^d$ is given by
\begin{align}
F(x) = &\int f_1(x,y)\,\mu(\dif y) + \int_0^\infty \dif s \int \E \left[ f_0(x,y)\cdot \partial_x f_0(x,\varphi^t y) \right] \,\mu(\dif y) \nonumber \\ &+ \int_0^\infty \dif s \int  \E\left[ g_1(x,y) \cdot \partial_y \left(f_0(x,\varphi^t y)\right) \right] \,\mu(\dif y) \label{eq.contF}  ,
\end{align}
with $\mu$ the ergodic invariant measure corresponding to $\dif y = g_0(y) \dif t + \beta(y) \dif W_t$ and $\E$ the expectation value w.r.t. the Wiener measure on $W_t$ in the stochastically driven case ($\beta \neq 0$). 
The diffusion coefficient $\sigma:\R^d\to\R^{d\times l}$ is given by the Green-Kubo formula
\begin{equation}
\hspace{-40pt}\sigma(x) \sigma^T(x) =\int_0^\infty \dif s\int \E\left( f_0(x,y) \otimes f_0(x,\varphi^ty) + f_0(x,\varphi^ty) \otimes f_0(x,y) \right)\, \mu(\dif y)  , \label{eq.contGK}
\end{equation}
where the outer product between two vectors is defined as $(a\otimes b)_{ij} = a_{i}b_{j}$. These expressions for the drift and diffusion can be derived formally by an asymptotic expansion of the backward Kolmogorov equation \footnote{Strictly speaking, these formulae are only valid for correlation functions which are slightly more than integrable. For fast systems with decaying autocorrelation functions which are only integrable, one can find expressions for the drift and diffusion coefficients which are, however, more complicated; see \cite{KellyMelbourne17} for details.}. \highlight{We remark that one can add a stochastic driver to the slow dynamics in (\ref{eq.skewx}) which would lead to an additively increased diffusion (\ref{eq.contGK}) (cf. \cite{PavliotisStuart}). For simplicity of exposition we do not consider this case here.}\\ 

In the deterministic case $\beta=0$, homogenization results \cite{MelbourneStuart11,GottwaldMelbourne13c,KellyMelbourne17} assure that the family of solutions of the original multi-scale dynamical system converge weakly in the sup-norm topology to the unique solution $X$ of the reduced stochastic differential equation (\ref{e.homog}) as $\varepsilon \to 0$. At the heart of diffusive limits of deterministic dynamical multi-scale systems lies a functional CLT which assures that $W_\eps(t) \to_w W(t)$ in $C([0,\infty),\R^d)$ in the limit $\eps \to 0$. The functional CLT implies the CLT but not vice versa. We are, however, only aware of a single example where a judiciously chosen unbounded observable of a deterministic dynamical system satisfies a CLT but not the associated functional CLT \cite{Gouezel04}.

In the following we evaluate corrections to the CLT for increments of $x$ by probing the finite $\eps$ corrections of the transition probability in an Edgeworth expansion.


\section{The Edgeworth expansion and corrections to the central limit theorem}
\label{sec.edge}
Before introducing Edgeworth expansions for continuous time multi-scale systems, it is instructive to briefly review the case of Edgeworth expansions in the discrete time stochastic case.\\ The central limit theorem describes when appropriately scaled sums $$S_n = \frac{1}{\sqrt{n}} \sum_{i = 1}^n (y_i - m)$$ of variables $y_i$ with mean $m$ and variance $\sigma^2$ converge in distribution to a normal distribution in the sense that
\begin{align*}
\PP(a\le S_n\le b) \to \frac{1}{\sqrt{2\pi\sigma^2}}\int_a^be^{-\frac{s^2}{2\sigma^2}}\,\dif s
\end{align*}
as $n\to\infty$ \cite{fellerbook,cinlarbook}. It is valid for {\em{i.i.d.}}\ random variables, weakly dependent random variables \cite{fellerbook,cinlarbook}, as well as for a large class of dynamical systems \cite{Dolgopyat04,MelbourneNicol05,MelbourneNicol08,MelbourneNicol09,Gouezel10}. Edgeworth expansions describe deviations from the CLT for finite $n$. We briefly review in the next subsection the well studied case of Edgeworth expansions for stochastic random variables.


\subsection{Edgeworth expansions for stochastic random variables}
In the stochastic context Edgeworth expansions are usually derived in the spectral framework by studying the characteristic function $\chi_n(\xi) =\mathbb{E} [\exp (i \xi S_n)]$ \ of the random variable $S_n$ ($\E$ denotes in this subsection the expectation with respect to $\mu$, the distribution of $y$). The characteristic function $\chi_n$ is related to the cumulants $c^{(j)}_n$ of $S_n$ by $\chi_n(\xi) = \exp(\sum_{j=1}^\infty \frac{(i\xi)^j}{j!}c^{(j)}_n )$. Assuming that $c^{(1)}_n = \mathbb{E}[S_n] = 0$, we have for weakly dependent random variables with uniform or strong mixing properties that $c^{(2)}_n = c^{(2)}_\infty + \frac{1}{n} \delta c^{(2)} + {o}(n^{-1})$, $c^{(3)}_n = \frac{1}{\sqrt{n}} c^{(3)}_\infty + {o}(n^{-1})$ and $c^{(4)}_n = \frac{1}{n} c^{(4)}_\infty + {o}(n^{-1})$, while higher cumulants are of higher order in $1/\sqrt{n}$. Specifically, up to order $o(n^{-1/2})$ the expansions of the cumulants are related to the correlation functions of the parent random variables $y_i$ by
\begin{align*}
\quad\;\,c_n^{(2)} &= \sigma^2 - 2\frac{1}{n} \sum_{j=1}^\infty j \E[y_1 y_{j+1}]\\
\sqrt{n} c_n^{(3)} &= \E[y_1^3] +3\sum_{j=1}^\infty \left(\E[y_1^2y_{j+1}] + \E[y_1y_{j+1}^2] \right) +6 \sum_{i,j=1}^\infty \E[y_1y_{1+i}y_{1+i+j}]
\end{align*}
with $\sigma^2=\E[y^2_1]+2\sum_{j=1}^\infty\E[y_1 y_j]$ given by the Green-Kubo formula \cite{GoetzeHipp83}. The above equations implicitly define the terms $c_\infty^{(2)}$, $\delta c^{(2)}$ and $c_\infty^{(3)}$. Inserting the expansions of the cumulants into the characteristic function, we obtain
\begin{align*}
\chi_n (\xi) & =  \exp \left( c^{(2)}_{n} \frac{(i \xi)^2}{2!} +
  c^{(3)}_{n} \frac{(i \xi)^3}{3!} + c^{(4)}_{n} \frac{(i \xi)^4}{4!} + \ldots \right)\\
  & = \exp \left(
  \frac{1}{\sqrt{n}} c^{(3)}_{\infty} \frac{(i \xi)^3}{3!} + \frac{1}{n}
  \left( \delta c^{(2)}  \frac{(i \xi)^2}{2!} + c^{(4)}_{\infty} \frac{(i
  \xi)^4}{4!} \right) + \mathcal{O} \left( \frac{1}{n^{3 / 2}} \right) \right)
  \times \exp \left( - c^{(2)}_{\infty} \frac{\xi^2}{2} \right) \\
  & = \left( 1 +
  \frac{1}{\sqrt{n}} c^{(3)}_{\infty} \frac{(i \xi)^3}{3!} + \frac{1}{n}
  \left( \delta c^{(2)}  \frac{(i \xi)^2}{2!} + c^{(4)}_{\infty} \frac{(i
  \xi)^4}{4!} \right) \right. \\
  & \left. \hspace{.2\textwidth} + \frac{1}{2} \left(\frac{1}{\sqrt{n}} c^{(3)}_{\infty} \frac{(i \xi)^3}{3!}\right)^2 + \mathcal{O} \left( \frac{1}{n^{3 / 2}} \right) \right) \times \exp \left( - c^{(2)}_{\infty} \frac{\xi^2}{2} \right)  .
\end{align*}
The characteristic function therefore converges pointwise to $\exp ( - c^{(2)}_{\infty} \xi^2/2 )$, which is the characteristic function of a Gaussian with variance $ c^{(2)}_\infty=\sigma^2$. Since, by L\'evy's continuity theorem, pointwise convergence of the characteristic functions is equivalent to convergence in distribution, the CLT follows. An expansion in $\frac{1}{\sqrt{n}}$ of the probability density function (pdf) of $S_n$ is obtained by the inverse Fourier transform of the characteristic function $\chi_n(\xi)$. Under the inverse Fourier transform the terms $(i \xi)^k$ in the expansion become $k$-th derivatives of the normal distribution, i.e. Hermite polynomials. Let $\rho_n$ be the pdf of the normalized Birkhoff sum $S_n$. Then, the first two Edgeworth approximations $\rho^{(1)}_{n}$ and $\rho^{(2)}_{n}$ of the probability density $\rho_n$ of $S_n$ are
\begin{align}
  \rho^{(1)}_{n} (\xxi) & =  \mathbf{n}_{0,\sigma^2} (\xxi)\left(1 + \frac{c^{(3)}_\infty}{6 \sigma^3
  \sqrt{n}} H_3 \left(\frac{\xxi}{\sigma}\right)\right) , 
\label{eq.edgeworthDens}
\end{align}
and
\begin{align}
  \rho^{(2)}_{n} (\xxi) & =  \mathbf{n}_{0,\sigma^2} (\xxi)\left(1 
  + \frac{c^{(3)}_\infty}{6 \sigma^3 \sqrt{n}} H_3 \left(\frac{\xxi}{\sigma}\right)
   + \frac{\delta c^{(2)}}{2 \sigma^2 n} H_2 \left(\frac{\xxi}{\sigma}\right)  + \frac{c^{(4)}_\infty}{24 \sigma^4 n} H_4 \left(\frac{\xxi}{\sigma}\right)
   + \frac{{c^{(3)}_\infty}^2}{72 \sigma^6 n} H_6 \left(\frac{\xxi}{\sigma}\right)
   \right) , \nonumber
\end{align}
where $H_k$ is the $k$-th Hermite polynomial. We have $\rho_n(\xxi) = \rho^{(1)}_{n} (\xxi) + o \left( \frac{1}{\sqrt{n}} \right)$ and $\rho_n(\xxi) = \rho^{(2)}_{n} (\xxi) + o \left( \frac{1}{{n}} \right)$ uniformly in $\xxi$ {\cite{fellerbook}}. The Edgeworth approximation generally yields an improved approximation of the pdf around the mean of the distribution \cite{FieldBook}. Note that $\rho^{(p)}_n$ is no longer nonnegative or normalized and therefore the approximation $\rho^{(p)}_{n}$ is no longer a probability density function. In contrast to the Gaussian distribution $\mathbf{n}_{0, \sigma^2} (\xxi)$, the first Edgeworth approximation may have a non-zero third moment $c^{(3)}_\infty / \sqrt{n}$, which vanishes as $n \rightarrow \infty$. Higher order approximations $\rho^{(p)}_{n}$ can be derived by increasing the order of the Taylor series. 

Interestingly, the functional form of the expansion is universal, in the sense that the parent process $y_i$ only enters through the asymptotic cumulants $c^{(2)}_\infty=\sigma^2$, $\delta c^{(2)}$, $c^{(3)}_\infty$ and $c^{(4)}_\infty$. 

In the stochastic context, Edgeworth expansions have been obtained for time series such as ARMA processes \cite{GoetzeHipp94}, continuous-time diffusions \cite{ait-sahalia_maximum_2002} and, employing the Nagaev-Guivarc'h method for the characteristic function, for ergodic Markov chains \cite{HervePene10}. Edgeworth expansions have, to the best of our knowledge, not been explored in the multi-scale context. The next sections aim at filling this gap.


\section{Edgeworth expansions for continuous-time multi-scale systems}
\label{sec.Econt}
We now determine Edgeworth corrections for multi-scale systems. The formulae will be given in terms of the generator of the fast process
\begin{align}
\mathcal{L}_0 = g_0(y) \cdot \partial_y + \frac{1}{2} \beta(y) \beta^T(y) : \partial_y \cdot \partial_y
\end{align}
with $L^2$-adjoint $\mathcal{L}^*_0$ which acts as 
\begin{align}
\mathcal{L}^*_0 \nu = - \partial_y \cdot (g_0(y) \nu) + \frac{1}{2} \partial_y\cdot \partial_y\cdot  \left( \beta(y)\beta^T(y)\, \nu\right) .
\label{e.L0star}
\end{align}
As described in the introduction, multi-scale systems \ds\ are characterized by three distinct time scales: the fast time scale of $\Ord(\eps^2)$, an intermediate time scale of $\Ord(\eps)$ on which the slow dynamics is trivial but the fast dynamics has equilibrated, and the long diffusive time scale of $\Ord(1)$ on which the slow dynamics exhibits nontrivial diffusive behaviour.
The particular r\^ole of the intermediate time scale to control the normality of the noise is formally reflected in the homogenized stochastic limit system (\ref{e.homog}) which evolves on the diffusive time scale of $\Ord(1)$; since $\dif W_t$ scales like $\sqrt{t}$ the Brownian motion is dominant on time scales of $\Ord(\eps)$. 
On this time scale, the transition probability $\pi_h$ of the diffusive homogenized limit system (\ref{e.homog}) satisfies the CLT and converges according to
\begin{align*}
\pi_h(\xxi,t,x_0) = \PP\left( \left. \frac{X(t)-x_0}{\sqrt{t}}  \in \left( \xxi, \xxi + \mathrm{d}\xxi \right) \right| X(0)=x_0 \right)\to\mathbf{n}_{0,\sigma^2}(\xxi) ,
\end{align*}
where we let ${t \to 0}$ since $t$ is of $\Ord(\eps)$ and where $X$ solves the homogenized stochastic differential equation (\ref{e.homog}). To study deviations from the Gaussian behaviour of the limit $\eps\to 0$, we develop in Appendix A a semi-group formalism to calculate the Edgeworth expansion of the intermediate-time transition probabilities of the slow variable $x$ of the multi-scale system \ds\ 
\begin{equation*}
\pi_\eps(\xxi,t,x_0) = \PP \left( \left. \frac{x(t) - x_0 }{\sqrt{t}} \in (\xxi, \xxi + \mathrm{d} \xxi) \right| x(0) = x_0, \, y(0) \sim   \mu_{x_0}^{(\eps)}\right),
\end{equation*}
for $t=\eps$, and expand $\pi_{\eps} ( \xxi, \eps, x_0 ) $ in $\sqrt{\eps}$. Here $\mu_{x_0}^{(\eps)}(y)$ is the invariant measure $\mu^{(\eps)}(x,y)$ of \eqref{eq.skewx}-\eqref{eq.skewy} conditioned on $x=x_0$. By the Rokhlin disintegration theorem \cite{rokhlin1949fundamental} the conditional measure is essentially unique and we furthermore assume that $\mu^{(\eps)}_{x_0}$ obeys linear response w.r.t. $\varepsilon$, such that the conditional measure can be expanded in $\varepsilon$ around $\mu$. Whereas the coefficients in the homogenized equation only involve the two-time statistics, higher-order Edgeworth expansions involve higher-order time correlations, containing more information about the dynamics.\\


As in the case of random variables described in the previous section, the Edgeworth expansion of the intermediate-time transition probability $\pi_\eps$ can be calculated by the asymptotic expansion of its associated characteristic function which is entirely determined by the cumulants. We therefore set out to asymptotically calculate the $p^{\rm{th}}$-moments of the slow variables $(x(t) - x_0)/\sqrt{t}$ in orders of $\sqrt{\eps}$ 
\begin{align}
m^{(p)} = \E^{x_0,\mu} \left[ \left( \frac{\hat{x}(t)}{\sqrt{t}} \right)^p \right]
,
\label{e.exp0}
\end{align}
with $\hat x(t) = x(t) - x_0$ on the intermediate time scale $t=\eps$ and the corresponding cumulants $c^{(p)}$. The conditional average $\E^{x_0, \mu}$ 
is with respect to the product measure
\begin{equation}
{\mu}^{(\eps)}(d x, d y) = \left( \delta_{x_0} \times {\mu}^{(\eps)}_{x_0} \right) (d x, d y).
\label{e.mueps}
\end{equation}
The measure $\mu^{(\eps)}_{x_0}$ depends on $\eps$ for non-skew product systems where the fast dynamics depends on the slow dynamics. Assuming that $\mu^{(\eps)}_{x_0}$ obeys linear response w.r.t. $\eps$ we expand the measure ${\mu}^{(\varepsilon)}_{x_0}$ as
\begin{equation}
{\mu}^{(\eps)}_{x_0} ={\mu}^{(0)}_{x_0} + \eps {\mu}^{(1)}_{x_0} + {\mathcal{O}}(\eps^2) , \label{eq.linres}
\end{equation}
with ${\mu}^{(0)}_{x_0} = \mu$ the invariant measure of the fast process $\dif {y} = g_0(y) \dif t + \beta(y) \dif W_t$. We note that, while linear response has been proven for a wide class of deterministic and stochastic systems and has been used for model reduction \cite{wouters_multi-level_2013,wouters_disentangling_2012,wouters_parameterization_2016}, counter-examples do exist \cite{BaladiSmania08,BaladiSmania10,Baladi14,BaladiEtAl15,DeLimaSmania15,GottwaldEtAl16}.
\\
We seek expansions of the scaled moments $m^{(p)}(t)$ in $\eps$ and in $t$. We take the limit $t=\varepsilon \ll 1$ and $t/\varepsilon^2 \to \infty$ as $\varepsilon \to 0$. In this limit the random variable $\hat{x}(t)/\sqrt{t}$ converges to a normal distribution and the coefficients of the different powers of $\sqrt{\eps}$ appearing in the expansions of its cumulants will provide the desired Edgeworth corrections coefficients. We expand the $p^{\rm{th}}$ moments as
\begin{equation}
m^{(p)} = m_0^{(p)}+\sum_{|{\bmalpha}|>0} \eps^{\alpha_\eps} t^{\alpha_t}m_{\bm\alpha}^{(p)} ,
\label{e.mexp}
\end{equation}
where we use multi-index notation to denote the expansions in $\eps$ and $t$ with ${\bmalpha}=(\alpha_\eps,\alpha_t)$ and with $|{\bmalpha}|=\alpha_\eps+\alpha_t$ being the combined order of the contribution. We similarly expand the cumulant as
\begin{equation}
c^{(p)} = c_0^{(p)}+\sum_{|{\bmalpha}|>0} \eps^{\alpha_\eps} t^{\alpha_t}c_{\bm\alpha}^{(p)}.
\label{e.cexp}
\end{equation}
Note that $\alpha_{\eps,t}<0$ and $\alpha_{\eps,t}$ half-integer is allowed.\\ 
Our expansion of the transition probability $\pi_\eps(\xxi,t=\eps,x_0)$ (\ref{e.gamma_xi}) involves the first four cumulants. The derivation of those cumulant expansions can be found in Appendix \ref{appendix}. Here we only state the resulting formulae. The first cumulant is given up to order $\mathcal{O}(\eps^{\frac{3}{2}})$ by
\begin{align}
c^{(1)} = \sqrt{t} \, c^{(1)}_{0,\frac{1}{2}} (x_0) + R^{(1)}_\eps ,
\label{e.m1}
\end{align}
where the remainders $R^{(j)}_\eps = \sum_{| {\bmalpha}|>1} \eps^{\alpha_\eps} t^{\alpha_t} c^{(j)}_{\bmalpha}$ for $j=1,\cdots,p$ consist of higher order terms  and  
\begin{align}
c^{(1)}_{0,\frac{1}{2}} = F(x)=\langle f_1 \rangle  - \langle f_0 \mathcal{L}_{0\perp}^{-1} \partial_x f_0 \rangle - \langle (g_1 \partial_y) \mathcal{L}_{0\perp}^{-1} f_0 \rangle ,
\label{e.m1b}
\end{align}
recovering the drift coefficient $F(x)$ of the homogenized equation \eqref{e.homog} (cf. \cite{GivonEtAl04,PavliotisStuart}). The angular brackets denote the conditional average with respect to $\mu^{(0)}_{x_0} = \mu$, i.e. $\langle A(x,y) \rangle = \int A(x,y) \mu (\dif y)$. The second cumulant and moment is given up to order $\mathcal{O}(\eps^{\frac{3}{2}})$ by
\begin{align}
c^{(2)} = m^{(2)} &= c^{(2)}_0 +  t\,  c_{0,1}^{(2)} + \frac{\eps^2}{t}c^{(2)}_{2,-1} + \eps\,  c^{(2)}_{1,0} + R^{(2)}_\eps.
\label{e.m2}
\end{align}
The $\Ord(1)$ contribution is given by the homogenized Green-Kubo formula \eqref{eq.contGK}
\[
 c^{(2)}_0 = \sigma^2 = - 2 \la f_0 \mathcal{L}_{0\perp}^{-1} f_0\ra 
\]
and higher-order contributions are given by
\begin{align}
c^{(2)}_{0,1}
&= \frac{1}{2} \, \sigma^{2} \left( \frac{\partial \sigma}{\partial x} \right)^{2} + \frac{1}{2} \,  \sigma^{3} \frac{\partial^{2}\sigma}{\partial x^{2}} +  \sigma^{2} \frac{\partial F}{\partial x} +  F \sigma \frac{\partial \sigma}{\partial x} +  F^{2} \label{e.m2c}
\\
c^{(2)}_{2,-1} &= 
- 2  \langle f_0 \mathcal{L}_{0\perp}^{-2} f_0 \rangle
\label{e.m2a}
\\
c^{(2)}_{1,0} &=
 - 2  \langle f_0 \mathcal{L}_{0\perp}^{-1} f_1 \rangle  
 - 2 \langle f_1 \mathcal{L}_{0\perp}^{-1} f_0 \rangle 
 + 2  \langle f_0 \mathcal{L}_{0\perp}^{-1} \partial_x f_0 \mathcal{L}_{0\perp}^{-1} f_0 \rangle 
 \nonumber\\ 
&\quad \,  + 4 \langle f_0 \mathcal{L}_{0\perp}^{-1} f_0 \mathcal{L}_{0\perp}^{-1} \partial_x f_0 \rangle 
+ 2 \langle f_0 \mathcal{L}_{0\perp}^{-1} (g_1 \partial_y) \mathcal{L}_{0\perp}^{-1} f_0 \rangle 
\nonumber\\
&\quad \, + 2 \langle (g_1 \partial_y) \mathcal{L}_{0\perp}^{-1} f_0 \mathcal{L}_{0\perp}^{-1} f_0 \rangle .
\label{e.m2b}
\end{align}
Here $\mathcal{L}_{0\perp}^{-1}$ denotes the invertible operator whose inverse is the restriction of $\mathcal{L}_0$ to the space orthogonal to the projection onto the invariant measure $\mu_{x_0}^{(0)}$ (see Section (\ref{sec.laplace}) in the Appendix for more details; cf. (\ref{e.L0inv})). Recall that $\mathcal{L}_{0}$, associated with the ergodic fast dynamics, has a nontrivial kernel, namely functions which are constant in $y$, and as such is noninvertible.

The third moment and its cumulant are given up to order $\mathcal{O}(\eps^{\frac{3}{2}})$ by
\begin{align}
c^{(3)} = m^{(3)} = \sqrt{t}\, c^{(3)}_{0,{\tiny{\frac{1}{2}}}} + \frac{\eps}{\sqrt{t}}\, c^{(3)}_{1,{-\tiny{\frac{1}{2}}}} + R^{(3)}_\eps
\label{e.m3}
\end{align}
with
\begin{align}
c^{(3)}_{0,{\tiny{\frac{1}{2}}}} &=
6\langle f_0 \mathcal{L}_{0\perp}^{-1} f_0 \rangle \frac{\partial}{\partial x}\langle f_0 \mathcal{L}_{0\perp}^{-1} f_0 \rangle 
\label{e.c3a}
\\
c^{(3)}_{1,{-\tiny{\frac{1}{2}}}} &=
6  \left\langle f_0 \mathcal{L}_{0\perp}^{-1} f_0 \mathcal{L}_{0\perp}^{-1} f_0 \right\rangle
\label{e.c3b}
\end{align}
and the  fourth cumulant is given up to order $\mathcal{O}(\eps^{\frac{3}{2}})$ by
\begin{align}
c^{(4)} =  t\,c^{(4)}_{0,1} +   \eps\,c^{(4)}_{1,0} +\frac{{{\varepsilon}}^{2}}{t}\,  c^{(4)}_{2,-1} +  R^{(4)}_\eps ,
\label{e.m4}
\end{align}
with
\begin{align}
c^{(4)}_{0,1} &= 
-24 \, \langle f_0 \mathcal{L}_{0\perp}^{-1} f_0 \rangle \left( \frac{\partial}{\partial x}\langle f_0 \mathcal{L}_{0\perp}^{-1} f_0 \rangle \right)^{2} 
- 16 \, \langle f_0 \mathcal{L}_{0\perp}^{-1} f_0 \rangle^{2} \frac{\partial^{2}}{\partial x^{2}}\langle f_0 \mathcal{L}_{0\perp}^{-1} f_0 \rangle 
\label{c4a} \\
c^{(4)}_{1,0} &=
- 24 \,  \langle \frac{\partial}{\partial x} f_0 \mathcal{L}_{0\perp}^{-1} f_0 \mathcal{L}_{0\perp}^{-1} f_0 \rangle \langle f_0 \mathcal{L}_{0\perp}^{-1} f_0 \rangle 
- 36 \, \langle f_0 \mathcal{L}_{0\perp}^{-1} f_0 \mathcal{L}_{0\perp}^{-1} f_0 \rangle \frac{\partial}{\partial x}\langle f_0 \mathcal{L}_{0\perp}^{-1} f_0 \rangle \label{c4b}
\qquad\\
c^{(4)}_{2,-1}  &=
24  \left( \langle f_0 \mathcal{L}_{0\perp}^{-2} f_0 \rangle \langle f_0 \mathcal{L}_{0\perp}^{-1} f_0 \rangle - \langle f_0 \mathcal{L}_{0\perp}^{-1} f_0 \mathcal{L}_{0\perp}^{-1} f_0 \mathcal{L}_{0\perp}^{-1} f_0 \rangle \right) \label{c4c}
.
\end{align}

Higher cumulants give rise to terms of at least $\mathcal{O}(\eps^{\frac{3}{2}})$. 
As expected from the CLT, the only $\mathcal{O}(1)$ contribution to the cumulants (\ref{e.m1})-(\ref{e.m4}) appears in the second cumulant (\ref{e.m2}). The higher-order terms determine the Edgeworth corrections. At $\Ord(\sqrt{\eps})$ -- describing the lowest order correction to the CLT -- the first and third cumulant (\ref{e.m3}) feature, and, if non-zero, give rise to skewness. 
Corrections to the variance and the fourth-order cumulant $c^{(4)}$ start to contribute at order ${\mathcal{O}}(\eps)$. Note that the feedback of the slow dynamics on the approach to Gaussianity via vector field $f_1$ only appears in the corrections to the first moment. 

In this formulation of the cumulant expansion we have not yet substituted $t=\eps$. This allows us to separate the contributions which pertain in the limit $\eps \to 0$ and encode the Edgeworth corrections to the transition probabilities of the homogenized slow diffusion equation \eqref{e.homog}, namely those with $\alpha_\eps = 0$ (see also \cite{ait-sahalia_maximum_2002}).
\begin{cor}
For the transition probabilities $\pi_h(\xxi,t,x_0)$ of the homogenized system, we have the following expansion for small $t$
\begin{align}
 &\pi_h(\xxi,t,x_0) = \mathbf{n}_{0,\sigma^2} \left({\xxi}\right) \left( 1 
        + \sqrt{t} \left( \frac{c^{(1)}_{0,\frac{1}{2}}}{\sigma} H_1\left(\frac{\xxi}{\sigma}\right) + \frac{c^{(3)}_{0,\frac{1}{2}}}{3!\sigma^3} H_3\left(\frac{\xxi}{\sigma}\right) \right) \right. \label{e.gamma_xi_h} \\
         &\quad \left. + t \left(\frac{ c^{(2)}_{0,1} + {c^{(1)}_{0,\frac{1}{2}}}^2}{2 \sigma^2} H_2\left(\frac{\xxi}{\sigma}\right) + \frac{c^{(4)}_{0,1} + 4 c^{(1)}_{0,\frac{1}{2}} c^{(3)}_{0,\frac{1}{2}}}{4! \sigma^4} H_4\left(\frac{\xxi}{\sigma}\right) + \frac{{c^{(3)}_{0,\frac{1}{2}}}^2}{2 (3! \sigma^3)^2} H_6\left(\frac{\xxi}{\sigma}\right) \right)
        \right) \nonumber \\
        &\hspace{10cm} + \Ord(t^{\frac{3}{2}}),
        \nonumber       
\end{align}
where $H_n(\xxi) = (\xxi - \frac{\dif}{\dif \xxi})^n 1$ are Hermite polynomials of degree $n$.
\end{cor}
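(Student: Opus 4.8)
The plan is to read the statement off the cumulant expansions (\ref{e.m1})--(\ref{e.m4}) by keeping only the contributions that survive in the homogenization limit. The homogenized diffusion (\ref{e.homog}) carries no explicit time-scale parameter, so its transition density is recovered from the slow process by sending $\eps\to0$ at fixed small $t$. In the expansions (\ref{e.cexp}) every term other than the $\alpha_\eps=0$ ones carries a strictly positive power of $\eps$ (for instance the $\eps\,c^{(p)}_{1,0}$ and $\frac{\eps^2}{t}c^{(p)}_{2,-1}$ contributions), so at fixed $t$ they all vanish as $\eps\to0$ and only the coefficients with $\alpha_\eps=0$ remain. Hence, through $\Ord(t)$, the cumulants of $(X(t)-x_0)/\sqrt t$ for (\ref{e.homog}) are
\begin{align*}
c^{(1)} = \sqrt t\,c^{(1)}_{0,\frac12}, \quad c^{(2)} = \sigma^2 + t\,c^{(2)}_{0,1}, \quad c^{(3)} = \sqrt t\,c^{(3)}_{0,\frac12}, \quad c^{(4)} = t\,c^{(4)}_{0,1},
\end{align*}
higher cumulants being $\Ord(t^{3/2})$. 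As a self-contained check one may instead expand these four cumulants directly from the generator of (\ref{e.homog}) through a short-time It\^o--Taylor expansion of the moments, which reproduces the same $\alpha_\eps=0$ coefficients.

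With the homogenized cumulants in hand I would repeat the spectral argument of Section~\ref{sec.edge} verbatim, now with $\sqrt t$ in the role of $1/\sqrt n$. Forming the characteristic function $\chi_h(\xi)=\exp\left(\sum_p \frac{(i\xi)^p}{p!}c^{(p)}\right)$, I factor out the Gaussian part $\exp(-\sigma^2\xi^2/2)$ and split the remaining exponent into an $\Ord(\sqrt t)$ piece $A=c^{(1)}_{0,\frac12}(i\xi)+\frac{1}{3!}c^{(3)}_{0,\frac12}(i\xi)^3$ and an $\Ord(t)$ piece $B=\frac12 c^{(2)}_{0,1}(i\xi)^2+\frac{1}{4!}c^{(4)}_{0,1}(i\xi)^4$. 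Expanding the exponential to second order gives $\chi_h=\exp(-\sigma^2\xi^2/2)\left(1+\sqrt t\,A+t\left(B+\frac12 A^2\right)+\Ord(t^{3/2})\right)$, and the square $\frac12 A^2$ supplies exactly the cross contributions $\frac12(c^{(1)}_{0,\frac12})^2(i\xi)^2$, $\frac{1}{3!}c^{(1)}_{0,\frac12}c^{(3)}_{0,\frac12}(i\xi)^4$ and $\frac{1}{72}(c^{(3)}_{0,\frac12})^2(i\xi)^6$. An inverse Fourier transform turns each monomial $(i\xi)^k\exp(-\sigma^2\xi^2/2)$ into $\sigma^{-k}H_k(\xxi/\sigma)\,\mathbf{n}_{0,\sigma^2}(\xxi)$; collecting the $H_1,H_3$ coefficients at order $\sqrt t$ and the $H_2,H_4,H_6$ coefficients at order $t$ then reproduces (\ref{e.gamma_xi_h}).

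I expect the genuinely delicate step to be the bookkeeping forced by the non-zero first cumulant, which is what distinguishes this computation from the zero-mean template of Section~\ref{sec.edge}. Because $c^{(1)}_{0,\frac12}=F$ is generically non-zero, the quadratic term $\frac12 A^2$ feeds back into the $H_2$ and $H_4$ coefficients, producing the $(c^{(1)}_{0,\frac12})^2$ and $4\,c^{(1)}_{0,\frac12}c^{(3)}_{0,\frac12}$ terms in (\ref{e.gamma_xi_h}); one must therefore be scrupulous about whether each $c^{(p)}_{0,\cdot}$ enters as a cumulant or as a raw moment, so that these cross terms are neither double-counted nor dropped. The remaining point requiring care is the legitimacy of exchanging the $\eps\to0$ limit with the small-$t$ expansion: every discarded coefficient carries $\alpha_\eps>0$ and so vanishes at fixed $t$, while the retained $\alpha_\eps=0$ part of each remainder $R^{(j)}_\eps$ is of order $\Ord(t^{3/2})$. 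As noted after the main result, turning this formal matching of coefficients into a rigorous expansion would additionally require quantitative control of those remainders, which lies beyond the functional form established here.
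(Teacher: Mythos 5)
Your proposal is correct and follows essentially the same route as the paper: the corollary is obtained by retaining only the $\alpha_\eps=0$ coefficients of the cumulant expansions (\ref{e.m1})--(\ref{e.m4}), which are exactly the contributions surviving the homogenization limit $\eps\to 0$ at fixed small $t$, and then running the characteristic-function/Hermite-polynomial machinery of Section~\ref{sec.edge} with $\sqrt{t}$ playing the role of $1/\sqrt{n}$. Your explicit bookkeeping of the $\tfrac{1}{2}A^2$ cross terms (yielding the $(c^{(1)}_{0,\frac12})^2$, $4\,c^{(1)}_{0,\frac12}c^{(3)}_{0,\frac12}$ and $(c^{(3)}_{0,\frac12})^2$ coefficients) matches the stated formula, and your caution about treating the $c^{(p)}_{0,\cdot}$ as cumulant rather than raw-moment coefficients is well placed.
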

The remaining contributions, \eqref{e.m2a},\eqref{e.m2b},\eqref{e.c3b},\eqref{c4b} and \eqref{c4c} involve intricate correlations between the slow and the fast dynamics which do not vanish on the intermediate time scale for $\eps \to 0$ and $t\neq 0$. In particular the homogenized limit $\eps \to 0$ does not involve the slow vector field $f_1$. 

In homogenization the knowledge of the drift $F$ and the Green-Kubo diffusion $\sGK$ is sufficient to determine the effective reduced stochastic slow dynamics (cf. (\ref{e.homog})). Going beyond the CLT requires higher-order Edgeworth corrections involving indefinite integrals over multi-point correlation functions. It is instructive to note that the vector field $f_1(x,y)$ nontrivially enters the corrections to the variance (\ref{e.m2}). The detailed proof of our main result (\ref{e.gamma_xi}) can be found in Appendix \ref{appendix}.

\section{Numerical results}
\label{sec.numerics}
We now corroborate our main result on the explicit formula (\ref{e.gamma_xi}) for the transition probability $\pi_\eps(\xxi,t=\eps,x_0)$ with numerical simulations. We show that the Edgeworth approximation of the intermediate-time transition probability provides a much better approximation to the true transition probability of the multi-scale system with finite time scale separation than the Gaussian limiting distribution implied by the assumption of infinite-time-scale separation and homogenization theory.\\ 
We consider here as an example of the general multi-scale system \ds\/, the following skew-product slow-fast system for a one-dimensional slow variable
\begin{align}
  \dot{x} = \frac{1}{\eps} f_0(y)  - V'(x) \label{eq.sfL63x} 
\end{align}
driven by a fast Lorenz system   
\begin{align}
  \dot{y_1} &= \frac{s_l}{\eps^2} (y_2-y_1) \label{eq.sfL63y1} \\
  \dot{y_2} &= \frac{1}{\eps^2} (y_1 (r_l - y_3) - y_2) \label{eq.sfL63y2}\\
  \dot{y_3} &= \frac{1}{\eps^2} (y_1 y_2 - b_l y_3) \label{eq.sfL63y3}  ,
\end{align}
where we choose the classical parameters $s_l=10$, $r_l=28$ and $b_l=8/3$ \cite{Lorenz63}. The Lorenz system is rapidly mixing with exponentially decaying correlation \cite{ArujoEtAl15}. The fast variables drive the slow variable $x$ via a function $f_0(y) = \cos(y_2/2) \exp(y_1/40) -\bar f_0$, generating skewed deterministic noise. Here $\bar f_0$ is a constant chosen such that the expectation value of $f_0$ under the fast dynamics is zero and the system \eqref{eq.sfL63x}--\eqref{eq.sfL63y3} satisfies the centering condition. We consider here a harmonic potential $V(x)= \alpha x^2$ \textcolor{blue}{(i.e. $f_1=V'(x)$)}. In the absence of coupling to the fast Lorenz system the slow dynamics settles to the stable fixed point $x=0$. With nontrivial coupling $f_0$, the fast dynamics induces fluctuations around the stable fixed point. An example of a slow trajectory is given in Figure~\ref{fig.slowx}. On large enough time scales we observe seemingly stochastic behaviour (left-hand figure), similar to that of the limiting homogenized system. On shorter time scales, the smoothness of the noise is however still discernible (right-hand figure).

\begin{figure}[h]
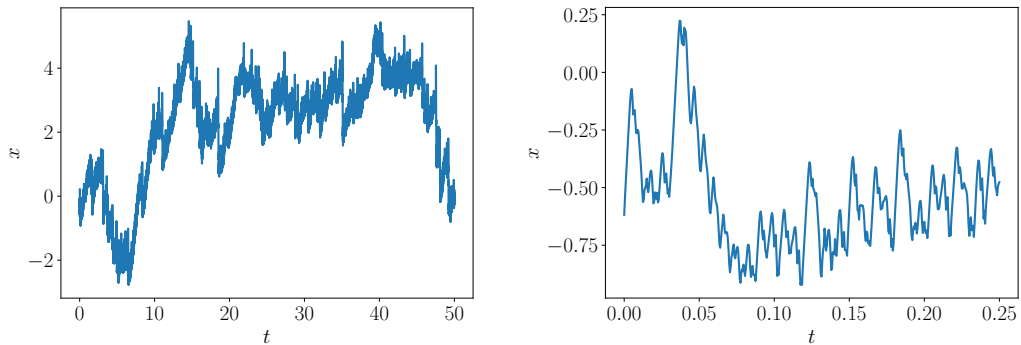

	\centering
	\begin{tabular}{c c}
		\includegraphics[width=.475\textwidth]{{{sfL63_quadratic_slow_trajectory}}} & 
		\includegraphics[width=.5\textwidth]{{{sfL63_quadratic_slow_trajectory_short}}}
	\end{tabular}
	
	\caption{Trajectory of the slow variable $x$ of the slow-fast system \eqref{eq.sfL63x}-\eqref{eq.sfL63y3} with $\alpha=0.1$ and $\eps=0.1$.\label{fig.slowx}}
\end{figure}

To numerically estimate the transition probability $\pi_\eps(\xxi,t=\eps,x_0)$ we perform ensemble averages of the slow variable $x(t=\eps)$ using $M=10^5$ long time simulations of $6\times 10^7$ time units using a Dormand-Prince order $4/5$ Runge-Kutta method \cite{odejl,dormand_family_1980}. The simulations have the same initial condition $x(0)=1$ of the slow variable but differ in the initial conditions for the fast Lorenz system which are chosen randomly from the Lorenz attractor (this is assured by letting random initial conditions settle on the attractor after a transient period of $10$ time units).

The limiting Gaussian transition probability $\pi_h$ can be analytically determined from the homogenized limiting equation $\mathrm{d} X = - \alpha X \mathrm{d}t + \sigma \mathrm{d}W$ which is an Ornstein-Uhlenbeck process where the diffusion coefficient $\sigma^2$ is given by the Green-Kubo formula. The limiting Gaussian transition probability has mean $x_0 \exp(-\alpha t)$ and variance by $(1 - \exp(-2 \alpha t)) \sGK/(2\alpha) $. The Edgeworth corrections to this limiting distribution are given by (\ref{e.gamma_xi}) and the first four cumulants  (\ref{e.m1}), (\ref{e.m2}), (\ref{e.m3}) and (\ref{e.m4}). The terms $c^{(2)}_{2,1}$ \eqref{e.m2a}, $c^{(3)}_{1,\frac{1}{2}}$ \eqref{e.c3b} and $c^{(4)}_{2,-1}$ \eqref{c4c} are estimated numerically by setting the gradient force $V^\prime(x)\equiv 0$ in the long time integrations. Setting $f_1 = V^\prime(x)\equiv 0$ implies that the Edgeworth expansion only involves  \eqref{e.m2a}, \eqref{e.c3b} and \eqref{c4c} which in turn can be accurately estimated without being numerically dominated by the other terms involving $f_1$. For details on the calculations of these coefficients see \cite{wouters_stochastic_2018}.

Figure~\ref{fig.ptrans} shows a comparison of the transition probabilities of the full deterministic multi-scale system with $\eps=0.1$, the limiting Gaussian implied by the homogenized limit and the transition probability given by the Edgeworth expansion. It is clearly seen that the the homogenized limit system is not able to capture the inherent skewness of the slow dynamics, whereas the Edgeworth approximation capture the transition probability remarkably well. We also see that the Edgeworth approximation is only a good approximation of the transition probability for the slow variable near the mode and exhibits deviations for $\xxi$-values far from the mean with unphysical negative values. 
\begin{figure}[h]
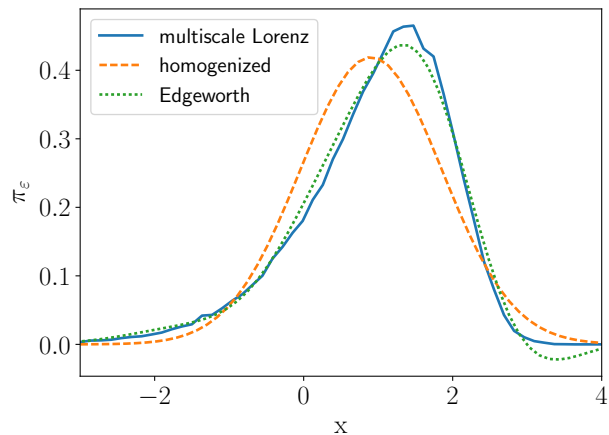

\centering
\begin{tabular}{c c}
  \includegraphics[height=6cm]{{{ptrans}}}
\end{tabular}

\caption{Transition probability $\pi_\eps(\xxi,t,x_0)$ of the slow variable for the full multi-scale Lorenz system \eqref{eq.sfL63x}-\eqref{eq.sfL63y3} with $\alpha=0.1$ and $\eps=0.1$ (blue solid), transition probability $\pi_h(\xxi,t,x_0)$ of the corresponding homogenized system (orange dashed) and $\pi_\eps(\xxi,t,x_0)$ of the Edgeworth expansion \eqref{e.gamma_xi} (green dotted). The transition probabilities are  evaluated at $t=1$ and were initiated at $x_0=1$.
\label{fig.ptrans}}
\end{figure}


\section{Summary and outlook}
\label{sec.summary}

In this article we derived Edgeworth expansions that describe corrections to the Gaussian limiting behaviour of slow-fast systems for finite time scale separation. The Edgeworth expansion is achieved using a semi-group formalism for the transfer operator, where a Duhamel-Dyson series is used to asymptotically determine the corrections at any desired order of the time scale parameter $\eps$. The corrections appear on the intermediate time scale $\mathcal{O}(\eps)$ and the asymptotics requires the limit $t=\eps\ll 1$ and $t/\eps^2\to \infty$. We developed a diagrammatic representation of higher-order correlation integrals to control the combinatorial wealth of their asymptotic expansion in $\eps$. To obtain our explicit formula for the transition probability (\ref{e.gamma_xi}) we required mixing assumptions on the integrability of higher-order autocorrelation functions. It is pertinent to mention that homogenization theory does not rely on mixing and the diffusive limit equations exist for non-mixing systems (albeit not with a Green-Kubo formula (\ref{eq.contGK}) for the diffusion), and similarly there are mixing systems which do not allow for a diffusive limit. We expect that similarly one can derive an Edgeworth expansion without these strong mixing assumptions, at the cost of not having compact explicit expressions for the drift and diffusion.\\

Our work points to several applications and directions, planned for further research. 
We derive here the Edgeworth expansion for continuous-time multi-scale systems in the case where at leading order the slow dynamics does not couple back into the fast dynamics, i.e. $g_0=g_0(y)$. 
We also expect a similar expansion to hold when $g_0=g_0(x,y)$ and the slow dynamics couples back into the fast dynamics at leading order. A complicating issue here is the potential breakdown of linear response when the fast invariant measure does not depend smoothly on the slow variables. 

Edgeworth corrections to the CLT are not restricted to systems where the noise originates as the accumulative effect of rapidly decorrelating fast variables, as we have described here for slow-fast systems. Sums of uncorrelated deterministic variables also appear in weak coupling limits where a distinguished degree of freedom is weakly coupled to a bath of $N$ degrees of freedom. Here stochastic limit systems arise in the limit of an infinitely large bath with $N\to\infty$ \cite{GivonEtAl04}. Again, corrections for finite values of $N$ can be studied using an Edgeworth expansion. 

The universal form of the deviations from Gaussianity given by the Edgeworth expansion suggests that one can devise stochastic parametrizations and effective diffusive dynamics for the slow variables for finite time scale separation by substituting the fast dynamics with a surrogate system with the same Edgeworth coefficients, improving on the classical homogenization limit SDEs. In particular, the universal character allows for a data-driven approach where the cumulants are numerically estimated to build a stochastic model for the observed variables.


\enlargethispage{20pt}


\dataccess{This article has no additional data.}

\aucontribute{JW designed the research and performed the numerical experiments. All authors contributed to the research and the writing of the paper.}

\competing{There are no competing interests.}

\funding{The research leading to these results has received funding from the European Community's Seventh Framework Programme (FP7/2007-2013) under grant agreement n\textdegree\,  PIOF-GA-2013-626210. GAG is partially supported by the ARC grant DP180101385.}

\ack{We thank Ben Goldys and Fran\c{c}oise P\`{e}ne for enlightening discussions and comments.}




\appendix

\section{Appendix: Derivation of the cumulant expansion}
\label{appendix}
{\bf{Outline of the derivation:}} The Edgeworth expansion involves an expansion of the cumulants in orders of $\eps$. The derivation of the Edgeworth expansion proceeds in a number of steps. We first consider the Laplace transform of the moments, when expressed in terms of the Koopman operator, and perform a subsequent expansion using eigenfunctions of the fast generator  $\mathcal{L}_0$ (Section~\ref{sec.laplace}). Introducing a two-dimensional diagrammatic representation we perform a combinatorial accounting of the sequences appearing in the expansion of the Laplace transform (Section~\ref{sec.nonzero}). In particular, this allows us to identify those sequences which vanish and those which will give non-trivial contributions. This will allow us to provide the explicit expressions of the cumulants $c^{(p)}$ with $p\le 4$ stated in (\ref{e.m1}), (\ref{e.m2}), (\ref{e.m3}) and (\ref{e.m4}) to capture the Edgeworth corrections up to $\Ord(\eps^{\frac{3}{2}})$ (Section~\ref{sec.calcmoments}). We then show that it is sufficient for the estimation of the cumulants up to $\Ord(\eps^{\frac{3}{2}})$ to perform averages with respect to the fast measure $\mu^{(0)}_{x_0}$ only, discarding contributions from the linear response term $\mu^{(1)}_{x_0}$ (Lemma~\ref{Lemma_LinResp}). We conclude with Lemma~\ref{Lemma_HigherOrder} showing that higher cumulants $c^{(p)}$ with $p\ge 5$ do not contribute at order $\Ord(\eps^{\frac{3}{2}})$ to the Edgeworth expansion and the transition probability $\pi_\eps(\xxi,t=\eps,x_0)$. \\

{\bf{Derivation of the cumulant expansion:}} We express the conditional average in \eqref{e.exp0} in terms of the Koopman operator $e^{\mathcal{L}t}$ associated with the multi-scale system (\ref{eq.skewx})-(\ref{eq.skewy}) 
with the generator 
\begin{align}
\mathcal{L} =&\frac{1}{\varepsilon^2}\mathcal{L}_0 + \frac{1}{\varepsilon}\mathcal{L}_1 + \mathcal{L}_2, 
\end{align}
where
\begin{align}
\mathcal{L}_0  &= g_0(y) \cdot \partial_y + \frac{1}{2} \beta(y)\beta^T(y) : \partial_y \cdot \partial_y ,\quad
\mathcal{L}_1  &= f_0(x,y) \cdot \partial_x  + g_1(x,y) \cdot  \partial_y ,\quad 
\mathcal{L}_2  &= f_1(x,y) \cdot \partial_x  .
\end{align}
We seek expansions in $t$ and $\varepsilon$ of the moments
\begin{align}
  m^{(p)} 
  & =  \frac{1}{t^{p / 2}} \mathbb{E}^{x_0,{\mu}} \left[
  e^{t \left( \frac{\mathcal{L}_0}{\varepsilon^2} +
  \frac{\mathcal{L}_{12}}{\varepsilon} \right)} z^{(p)} (x) \right] 
 =  \frac{1}{t^{p / 2}} 
  \int \int e^{t \left( \frac{\mathcal{L}_0}{\varepsilon^2} + \frac{\mathcal{L}_{12}}{\varepsilon} \right)} z^{(p)} (x)  \left( \delta_{x_0} \times \mu_{x_0}^{(\eps)} \right)(\dif x, \dif y) ,
\label{eq.momentp}
\end{align}
where $z^{(p)} (x) = (x - x_0)^p$ and where we introduce for convenience $\mathcal{L}_{12} = \mathcal{L}_1 + \varepsilon \mathcal{L}_2$. Using (\ref{eq.linres}), we split the conditional average over the invariant measure $\mu^{(\eps)}$ (see (\ref{e.mueps})) according to
\begin{align}
m^{(p)} &= \frac{1}{t^{p/2}} \delta_{x_0} \left( (\mu_{x_0}^{(0)} + \eps \mu_{x_0}^{(1)} +\mathcal{O}(\eps^2) ) e^{t \mathcal{L}} z^{(p)}(x) \right) 
= \frac{1}{t^{p/2}} \delta_{x_0}\left( \mathcal{A}^{(p)} + \eps\, \mathcal{B}^{(p)}+\mathcal{O}(\eps^2) \right) ,
\label{eq.momentp2}
\end{align}
where we define
\begin{equation*}
\mathcal{A}^{(p)}(t)  = \left\langle e^{t \left( \frac{\mathcal{L}_0}{\varepsilon^2} +
\frac{\mathcal{L}_{12}}{\varepsilon} \right)} z^{(p)} (x)
\right\rangle 
\quad {\rm{
and
}}
\quad
\mathcal{B}^{(p)}(t)  = \mu^{(1)}_{x_0} \left( e^{t \left( \frac{\mathcal{L}_0}{\varepsilon^2} +
\frac{\mathcal{L}_{12}}{\varepsilon} \right)} z^{(p)} (x)
\right) .
\end{equation*}
As before, angular brackets denote the average with respect to $\mu_{x_0}^{(0)}$.

We first calculate the first term in (\ref{eq.momentp2}) associated with the average over $\mu_{x_0}^{(0)}$ before proving that averages with respect to $\mu^{(1)}_{x_0}$ do not contribute in Lemma~\ref{Lemma_LinResp}.

\subsubsection{Laplace transform of the moments and expansion using eigenfunctions of $\mathcal{L}_0$}
\label{sec.laplace}

We assume that the spectrum of the generator $\mathcal{L}$ is contained in the left-half complex plane, $\lbrace z \in \mathbb{C}: \mathrm{Re}(z) \leq 0 \rbrace$ (see \cite{butterley_smooth_2007} for Anosov flows). We can then take the Laplace transform of $\mathcal{A}^{(p)} (t)$
\begin{align}
\mathfrak{L} \{ \mathcal{A}^{(p)} \} (s) & =  \left\langle \left( s
\1 - \frac{\mathcal{L}_0}{\varepsilon^2} -
\frac{\mathcal{L}_{12}}{\varepsilon} \right)^{- 1} z^{(p)} (x)
\right\rangle   \label{eq.laplace0}
\end{align}
for $\mathrm{Re}(s)>0$.

Using the operator identity
\begin{align}
  (C - D)^{- 1} & =  C^{- 1} + C^{- 1} D (C - D)^{- 1}
   =  C^{- 1} + C^{- 1} D C^{- 1} + C^{- 1} D C^{- 1} D C^{- 1} + \ldots 
  \label{eq.opid}
\end{align}
with $C = s \1 - \frac{\mathcal{L}_0}{\varepsilon^2}$ and $D =
\frac{\mathcal{L}_{12}}{\varepsilon}$
we have
\begin{align}
\hspace{-60pt}   \mathfrak{L} \{ \mathcal{A}^{(p)} \} (s) & =  \left\langle \left( 
   \left( s
   \1 - \frac{\mathcal{L}_0}{\varepsilon^2} \right)^{- 1} +
   \left( s \1 - \frac{\mathcal{L}_0}{\varepsilon^2} \right)^{-
   1} \frac{\mathcal{L}_{12}}{\varepsilon} \left( s \1 -
   \frac{\mathcal{L}_0}{\varepsilon^2} \right)^{- 1} \right. \right. \nonumber \\
   & \left. \left. + \left( s \1 -
   \frac{\mathcal{L}_0}{\varepsilon^2} \right)^{- 1}
   \frac{\mathcal{L}_{12}}{\varepsilon} \left( s \1 -
   \frac{\mathcal{L}_0}{\varepsilon^2} \right)^{- 1}
   \frac{\mathcal{L}_{12}}{\varepsilon} \left( s \1 -
   \frac{\mathcal{L}_0}{\varepsilon^2} \right)^{- 1} + \ldots \right)
   z^{(p)} \right\rangle \nonumber\\
   & =  \frac{z^{(p)}}{s} + \frac{1}{s^2} \left\langle \left(
   \frac{\mathcal{L}_{12}}{\varepsilon} +
   \frac{\mathcal{L}_{12}}{\varepsilon} \left( s \1 -
   \frac{\mathcal{L}_0}{\varepsilon^2} \right)^{- 1}
   \frac{\mathcal{L}_{12}}{\varepsilon} \right. \right.  \label{eq.laplace1}\\
   & \left. \left. \qquad \qquad \qquad
    + \frac{\mathcal{L}_{12}}{\varepsilon} \left( s
   \1 - \frac{\mathcal{L}_0}{\varepsilon^2} \right)^{- 1}
   \frac{\mathcal{L}_{12}}{\varepsilon} \left( s \1 -
   \frac{\mathcal{L}_0}{\varepsilon^2} \right)^{- 1}
   \frac{\mathcal{L}_{12}}{\varepsilon} + \ldots \right) z^{(p)}
   \right\rangle , \nonumber
\end{align}
where we used in the last equality that $\mathcal{L}_0 A (x) =\mathcal{L}_0^* {\mu}_{x_0}^{(0)} = 0$.

We assume here that the resolvent is compact. This is the case on $L_p$-spaces if the generator includes diffusion. For the purely advective case of general deterministic multi-scale systems under consideration here, there is currently no theory available on what function spaces the resolvent is compact\footnote{For hyperbolic maps one has good statistical properties when considering anisotropic Banach spaces (see \cite{Baladi2018} and references therein).}. Under this assumption let us decompose
$s \1 - \frac{\mathcal{L}_0}{\varepsilon^2}$ into projectors $p_i$ on the eigenfunctions of $\mathcal{L}_0$, where $p_i \bullet = (r_i, \bullet) l_i$ and $p_0 \bullet = ({\mu}^{(0)}, \bullet) 1$, such that $\mathcal{L}_0 = \sum_{i=0}^\infty \lambda_i p_i$. The pairing $(a,b)$ is defined as $(a,b)=\int a(dy) b$. The eigenfunctions $l_i$ and $r_i$ satisfy $ (r_i, l_j) = \delta_{i,j}$ and correspond to the eigenvalue $\lambda_i$ with $\lambda_0 = 0$ \cite[Chapter 7]{MR832868}. 
Then
\begin{align*}
  s \1 - \frac{\mathcal{L}_0}{\varepsilon^2} & =  \sum_{i =
  0}^{\infty} \left( s - \frac{\lambda_i}{\varepsilon^2} \right) p_i = s p_0 +
  \left( s - \frac{\lambda_1}{\varepsilon^2} \right) p_1 + \left( s -
  \frac{\lambda_2}{\varepsilon^2} \right) p_2 + \ldots
\end{align*}
and therefore
\begin{align}
  \left( s \1 - \frac{\mathcal{L}_0}{\varepsilon^2} \right)^{-
  1} & =  \frac{1}{s} p_0 + \left( s - \frac{\lambda_1}{\varepsilon^2}
  \right)^{- 1} p_1 + \left( s - \frac{\lambda_2}{\varepsilon^2} \right)^{- 1}
  p_2 + \ldots 
  =  \frac{1}{s} p_0 + \left( s \1_{\perp} -
  \frac{\mathcal{L}_{0 \perp}}{\varepsilon^2} \right)^{- 1} \nonumber\\
  & =  \frac{1}{s} p_0 - \varepsilon^2 (\mathcal{L}_{0 \perp})^{- 1}
  - s \varepsilon^4 (\mathcal{L}_{0 \perp})^{- 2} - s^2 \varepsilon^6
  (\mathcal{L}_{0 \perp})^{- 3} - \ldots .  \label{eq.series}
\end{align}
where we define the operators restricted to the space orthogonal to the projection onto the invariant measure $\mu_{x_0}^{(0)}$ as $\mathcal{L}_{0 \perp} = \lambda_1 p_1 + \lambda_2 p_2 + \ldots$ and $\1_{\perp} = p_1 + p_2 + \ldots$. We note that formally 
\begin{align}
\mathcal{L}^{-1}_{0,\perp} = - \int_0^\infty \mathrm{d}\tau \left( e^{\tau \mathcal{L}_0} - p_0\right)
\label{e.L0inv}
\end{align}
and therefore formally $ \langle A \mathcal{L}^{-1}_{0\perp} f_0 \rangle = \langle A \mathcal{L}^{-1}_{0} f_0 \rangle$ and $ \langle f_0 \mathcal{L}^{-1}_{0\perp} A \rangle = \langle f_0 \mathcal{L}^{-1}_{0} A \rangle$ for any $A$ since $p_0 f_0 = \langle f_0 \rangle = 0$.

The expansion (\ref{eq.laplace1}) contains terms with both positive and negative powers of $s$. According to the residue theorem, only the residues contribute to the inverse Laplace transform. By the way it is defined in Eq. (\ref{eq.laplace1}), the function $\mathfrak{L} \{ \mathcal{A}^{(p)} \} (s)$ can only have poles at zero or at $\frac{\lambda_i}{\varepsilon^2}$. The poles at $\frac{\lambda_i}{\varepsilon^2}$ decay exponentially upon inverse Laplace transform (recall that $t/\eps^2 \to \infty$), and we therefore only need to consider the poles at zero.

\subsubsection{Combinatorial analysis of the expansion at each order in $\eps$ and $t$}
\label{sec.nonzero}
The expansion of the Laplace transform (\ref{eq.laplace0}) contains powers of $\eps$ and of $s$. To facilitate the combinatorial problem of accounting for the non-zero contributions at a specified order of $\eps$ and $s$ (or $t$, respectively) we introduce the following diagrammatic representation. 
We classify all terms in the expansion (\ref{eq.laplace1}) as sequences $S (a, b)$ with $a_i \in \mathbb{N}_0$, $b_j \in \{ 1, 2 \}$ of the form
\begin{align}
S (a, b) \assign \frac{1}{s^2} \langle B_{b_1} A_{a_1} B_{b_2} A_{a_2}
\ldots B_{b_l} A_{a_l} B_{b_{l + 1}} z^{(p)} \rangle , \label{eq.sequences}
\end{align}
where $A_0 = \frac{1}{s} p_0$, $A_1 = - \varepsilon^2 (\mathcal{L}_{0 \perp})^{- 1}$, $A_2 = - s \varepsilon^4 (\mathcal{L}_{0 \perp})^{- 2}$, etc. represent the terms in the series (\ref{eq.series}), and $B_1 =\mathcal{L}_1 / \varepsilon$ and $B_2 =\mathcal{L}_2$. We introduce the shorthand notation for a sequence of length $l$, $\left( \begin{array}{cccccc}
  & a_1 & \ldots &  & a_l & \\
  b_1 &  & \ldots & b_l &  & b_{l + 1}
\end{array} \right)$ for $S (a, b)$.
We also introduce a product $\otimes$ of two subsequences, with
\begin{align*}
 & \left( \begin{array}{cccccc}
  & a_1 & \ldots &  a_l & \\
  b_1 &  & \ldots &  & b_{l + 1}
\end{array} \right)
\otimes 
\left( \begin{array}{cccccc}
  & c_1 & \ldots &  c_l & \\
  d_1 &  & \ldots &  & d_{l + 1}
\end{array} \right) \\
= & \left( \begin{array}{cccccccccccc}
      & a_1 & \ldots &  a_l &           & 0 &     & c_1 & \ldots & c_l & \\
  b_1 &     & \ldots &      & b_{l + 1} &   & d_1 &     & \ldots &     & d_{l + 1}
\end{array} \right)  .
\end{align*}
For example
\begin{align*}
\frac{1}{s^2} \left\langle
\mathcal{L}_2 \frac{p_0}{s} \mathcal{L}_1 (- \varepsilon^2
(\mathcal{L}_{0 \perp})^{- 1}) \mathcal{L}_1 z^{(p)}
\right\rangle
= 
S ((0, 1), (2, 1, 1)) \\ 
=
\left( \begin{array}{ccccc}
  & 0 &  & 1 & \\
  2 &  & 1 &  & 1
\end{array} \right) = \left( \begin{array}{ccccc}
    \\
  2
\end{array} \right) \otimes \left( \begin{array}{ccccc}
    & 1 & \\
  1 &  & 1
\end{array} \right)   .
\end{align*}

The order of $\varepsilon$ of a given sequence is given by 
\[
O_\eps=
2 \sum_{i =
1}^l a_i + \sum_{j = 1}^{l + 1} (b_j - 2) = \left( 2 \sum_{i = 1}^l a_i +
\sum_{j = 1}^{l + 1} b_j \right) - 2 (l + 1) ,
\]
and the order of $s$ is given by
\[
O_s=
- 2 + \sum_{i = 1}^l (a_i - 1) ,
\]
 or, equivalently, the order of $t$ in the inverse Laplace transform is given by
 \[
 O_t = 1 - \sum_{i = 1}^l (a_i - 1) = 1 + l - \sum_{i =
1}^l a_i.
\]
The combined order in $t$ and $\varepsilon$ of the sequence is
\begin{align}
O_{\eps,t}
=\sum_{i = 1}^l a_i + \sum_{j = 1}^{l + 1} b_j - l - 1. \label{eq.comb-order}
\end{align}
Note that the combined order of a sequence does not change upon commutation of the subsequences in a product.
Finally, the sign of the sequence is given by $(- 1)^{\theta}$ where $\theta = \sum_{i = 1}^l \delta_{0,a_i}$ is the number of non-zero
elements among the $a_i$ where we used the Kronecker $\delta$.

We can readily identify certain sequences which do not contribute. The centering condition $\langle \mathcal{L}_1 \rangle = 0$ implies that sequences containing a subsequence $\left( \cdots \begin{array}{lll}
  0 &  & 0\\
  & 1 & 
\end{array} \cdots \right)$ vanish. For the same reason, sequences starting with $\left( \begin{array}{lll}
  & 0\\
1 & 
\end{array} \cdots \right)$ or ending in $\left( \cdots \begin{array}{lll}
0 &  \\
& 1 
\end{array} \right)$ vanish. Furthermore, sequences containing an insufficient number of derivatives with respect to $x$ do not contribute; if the number of derivatives is less than the order $p$, the sequence $S(a,b)$ averages to zero since $\delta_{x_0} \left( \partial_x^{n} (x - x_0)^p\right) = 0$ for $n<p$. Hence only those sequences with $l + 1 \geqslant p$ contribute since the operators $\mathcal{L}_1$ and $\mathcal{L}_2$ contain at most one $x$-derivative. This implies that for fixed moment order $p$ we have
\begin{align}
l \geq l_{\textrm{min}} = p - 1 \label{eq.lmin}
\end{align}
Recall that moments are rescaled by $t^{-\frac{p}{2}}$ (cf. \eqref{e.exp0}). We therefore require
\begin{align}
O_{\eps,t} - \lfloor \frac{p}{2} \rfloor \leq 1
\end{align}
to retain only terms that contribute to first order after normalization. This implies
\begin{align}
\sum_{i = 1}^l a_i + \sum_{j = 1}^{l + 1} b_j - l - 1 & \leqslant q + 1  .
\label{eq.order}
\end{align}
We now identify all those sequences which are compatible with the constraints \eqref{eq.order} and \eqref{eq.lmin}. For sake of exposition, we treat even and odd moments separately.
We first determine the possible sequences of different lengths $l$ for even moments $m^{(2 q)}$ where $l\geqslant 2 q - 1$.


\paragraph{Case $l = 2 q - 1$}
Equation~\eqref{eq.order} implies $\sum_{i = 1}^l a_i + \sum_{j = 1}^{l + 1} b_j \leqslant 3 q + 1$. The sequences of length $2 q - 1$ have in total $2 l + 1 = 4 q - 1$ elements. Each sequence must therefore contain at least $q-2$ zero elements. Since sequences with subsequence 
$\left( \cdots \begin{array}{lll}
  0 &  & 0\\
  & 1 & 
\end{array} \cdots \right)$ average to zero, there are 6 possible types of
sequence that fulfill the inequality, namely
\begin{eqnarray}
  \left(\begin{array}{ccc}
      & 1 &  \\
    1 &   & 1
  \end{array}\right)^{ \mbox{\normalsize $\otimes q$}} \qquad \;\;\; & \qquad & \begin{array}{l}
    \mbox{permutations: $1$} \\
    \mbox{$O_{\eps,t} =q$}
  \end{array}
  \label{seq.hom1}\\
  \left(\begin{array}{c}
      \\
    2
  \end{array}\right)^{ \mbox{\normalsize $\otimes 2$}} \otimes
  \left(\begin{array}{ccc}
      & 1 &  \\
    1 &   & 1
  \end{array}\right)^{ \mbox{\normalsize $\otimes (q-1)$}}
  & \qquad &  \begin{array}{l}
    \mbox{permutations: ${{q+1}\choose{2}}$}\\
    \mbox{$O_{\eps,t} =q + 1 $}
  \end{array}
  \label{seq.hom2}\\
   \left(\begin{array}{ccccccc}
      & 1 &   & 1 &   & 1 &   \\
    1 &   & 1 &   & 1 &   & 1
  \end{array}\right) \otimes 
  \left(\begin{array}{ccc}
      & 1 &  \\
    1 &   & 1
  \end{array}\right)^{ \mbox{\normalsize $\otimes (q-2)$}}
  & \qquad & \begin{array}{l}
      \mbox{permutations: ${{q-1}\choose{1}}$}\\
    \mbox{$O_{\eps,t} =q + 1 $}
  \end{array}
  \label{seq.sf1}\\
    \left(\begin{array}{ccc}
      & 1 &   \\
    2 &   & 1
  \end{array}\right) \otimes
  \left(\begin{array}{ccc}
      & 1 &   \\
    1 &   & 1
  \end{array}\right) ^{ \mbox{\normalsize $\otimes (q-1)$}}
  & \qquad &
  \begin{array}{l}
    \mbox{permutations: ${{q}\choose{1}}$} \\
    \mbox{$O_{\eps,t} =q + 1 $}
  \end{array}
  \label{seq.sf2}\\
    \left(\begin{array}{ccc}
      & 2 &   \\
    1 &   & 1
  \end{array}\right) \otimes
  \left(\begin{array}{ccc}
      & 1 &   \\
    1 &   & 1
  \end{array}\right) ^{ \mbox{\normalsize $\otimes (q-1)$}}
  & \qquad &
  \begin{array}{l}
    \mbox{permutations: ${{q}\choose{1}}$} \\
    \mbox{$O_{\eps,t} =q + 1 $}
  \end{array}
  \label{seq.sf3}\\
    \left(\begin{array}{ccc}
      & 1 &   \\
    1 &   & 2
  \end{array}\right) \otimes
  \left(\begin{array}{ccc}
      & 1 &   \\
    1 &   & 1
  \end{array}\right) ^{ \mbox{\normalsize $\otimes (q-1)$}}
  & \qquad &
  \begin{array}{l}
    \mbox{permutations: ${{q}\choose{1}}$} \\
    \mbox{$O_{\eps,t} =q + 1 $}
  \end{array}
  \qquad \qquad
  \label{seq.sf4}
\end{eqnarray}
with all possible permutations of subsequences. For $q=2$, we can verify by simple enumeration that these are all the sequences possible if we restrict to $b_j<3$. All other possible nonzero terms with $b_j>3$ can be obtained by substituting a $b_j$ in one of these sequences, but this operation increases the order. For $q>2$ the only way to extend a series without increasing the order is to append an element $\left(\begin{array}{ccc}
& 1 &   \\
1 &   & 1
\end{array}\right)$.

\paragraph{Case $l = 2 q$}
Equation~\eqref{eq.order} implies $\sum_{i = 1}^l a_i + \sum_{j = 1}^{l + 1} b_j \leqslant 3 q + 2$. The sequences of length $2 q$ have in total $2 l + 1 = 4 q + 1$ elements. Each sequence must therefore contain at least $q-1$ zero elements. There is only 1 possible type of sequence that fulfills the inequality, namely
\begin{eqnarray}
  \hspace{-50pt}\left(\begin{array}{ccccc}
      & 1 &   & 1 &   \\
    1 &   & 1 &   & 1
  \end{array}\right) \otimes
    \left(\begin{array}{ccc}
      & 1 &   \\
    1 &   & 1 
  \end{array}\right) ^{ \mbox{\normalsize $\otimes (q-1)$}}
   & \qquad   &
  \begin{array}{l}
    \mbox{permutations: ${{q}\choose{1}}$} \\
    \mbox{$O_{\eps,t} =q + 1 $}
  \end{array} \label{seq.sf5}
\end{eqnarray}
with all possible permutations of subsequences.

\paragraph{Case $l = 2 q + 1$}
Equation~\eqref{eq.order} implies $\sum_{i = 1}^l a_i + \sum_{j = 1}^{l + 1} b_j \leqslant 3 q + 3$. The sequences of length $2 q + 1$ have in total $2 l + 1 = 4 q + 3$ elements. Each sequence must therefore contain at least $q$ zero elements. There is only 1 possible type of sequence that fulfills the inequality, namely
\begin{eqnarray}
  \left(\begin{array}{ccc}
    & 1 &  \\
    1 &  & 1
  \end{array}\right)^{ \mbox{\normalsize $\otimes (q + 1)$}}  & \qquad &
  \begin{array}{l}
  \mbox{permutations: $1$} \\
  \mbox{$O_{\eps,t} =q + 1 $}
  \end{array}
  \label{seq.hom3}
\end{eqnarray}
with all possible permutations of subsequences.

\paragraph{Case $l \geqslant 2 q + 2$}
Nonzero sequences of length $2 q + 2$ are of at least order $q + 2$, so they do not contribute.\\

For odd moments $m^{(2 q + 1)}$, the inequality  (\ref{eq.order}) together with the condition on the number of derivatives $l\geqslant 2q$ implies that only
sequences (\ref{seq.sf5})-(\ref{seq.hom3}) contribute.

\subsubsection{Calculation of the moments $m^{(p)}$}
\label{sec.calcmoments}
We have now derived all nontrivial sequences contributing to the moments $m^{(p)}$. We can obtain explicit expressions for the moments by calculating for each sequence the expectation value as prescribed in Eq. \eqref{eq.sequences}. In the process, subsequences become functions of $x$ due to the projection $p_0$. Finally, we work out explicitly the $x$-derivatives in the sequence. For example, for the sequence $S((0,1),(2,1,1))$ with $p=3$, we have that, as described above,
\begin{align*}
\mathfrak{L}^{-1}S((0,1),(2,1,1)) &= - \varepsilon^2 t^2 \left\langle
\mathcal{L}_2 p_0 \mathcal{L}_1 
\mathcal{L}_{0 \perp}^{- 1} \mathcal{L}_1 (x-x_0)^3
\right\rangle \\
&= - 3 \, \varepsilon^2 t^2  \left\langle f_1(x,y) \right\rangle \left\langle f_0(x,y) 
\mathcal{L}_{0\perp}^{-1} f_0(x,y) \right\rangle .
\end{align*}

Recalling the multi-index notation to denote the expansions in $\eps$ and $t$ with ${\bmalpha}=(\alpha_\eps,\alpha_t)$ and with $|{\bmalpha}|=\alpha_\eps+\alpha_t$ being the combined order of the contribution, we obtain, in this way, from sequence (\ref{seq.hom1}) the  leading homogenized term ($|{\bmalpha}|=0$) in the $2 q$-th scaled moment $m^{(2 q)}$
\begin{align}
  m_0^{(2 q)} =  \frac{1}{t^q} \frac{t^q}{q!} (2 q)! \left( \frac{\sigma^{2}}{2} \right)^q, \label{eq.m02q}
\end{align}
where $\sigma^2 = - 2 \langle f_0 \mathcal{L}_{0\perp}^{-1} f_0 \rangle$ is the homogenized diffusion coefficient \eqref{eq.contGK}. The first factor $1/t^q$ is the normalization factor of the moment, the term $t^q/q!$ comes from the inverse Laplace transform of $1/s^{q+1}$ and $(2q)!$ comes from the $2q$-th derivative of $(x-x_0)^{2q}$. Finally, since only terms with $2q$ derivatives $\partial_x$ yield nonzero contributions, only the term with $f_0$ in $\mathcal{L}_1$ contributes, resulting in $\left\langle f_0 e^{t \mathcal{L}_0} f_0 \right\rangle ^q = \left( \sigma^{2}/{2} \right)^q$.

Similar combinatorial accounting distills from the sequences (\ref{seq.hom2}) and (\ref{seq.hom3}) the homogenized first correction term ($\alpha_\eps = 0$, $|{\bmalpha}| = 1$) in the $2 q$-th scaled moment $m^{(2 q)}$
\begin{align*}
t \, m_{0,1}^{(2 q)} & =  t \, \frac{(2 q) !}{(p + 1) !} \left( 
  {{q+1}\choose{2}} F^2 \left( \frac{\sigma^2}{2^{}} \right)^{q - 1} + F \left(
  \frac{\sigma^2}{2} \right)^{(q - 1)} \sigma \partial_x \sigma \left( q^3 +
  \frac{q^2}{2} - \frac{q}{2} \right) \right. \\
  &\left.
  \phantom{t \frac{(2 p) !}{(p + 1) !}} + q (q + 1) \left( \frac{\sigma^2}{2}
  \right)^q \partial_x F + \frac{\sigma^{(2 q + 1)}}{2^{(q + 1)}} \partial^2_x
  \sigma \left( \frac{4}{3} q^3 + q^2 - \frac{q}{3} \right) \right. \\
  &\left.
  \phantom{t \frac{(2 p) !}{(p + 1) !}} + \frac{\sigma^{2 q}}{2^{q + 1}}
  (\partial_x \sigma)^2  \frac{q}{3} (q + 1) (6 q^2 - 2 q - 1) \right) ,
\end{align*}
where $F = \langle f_1 \rangle - \langle (g_1 \partial_y) \mathcal{L}_{0\perp}^{-1} f_0 \rangle - \langle f_0 \mathcal{L}_{0\perp}^{-1} \partial_x f_0 \rangle$ is the homogenized drift coefficient \eqref{eq.contF}.

From sequences (\ref{seq.sf1})-(\ref{seq.sf4}) we obtain the non-homogenized corrections ($\alpha_\eps > 0$, $|{\bmalpha}| = 1$) to the $2 q$-th scaled moment with $2 q$ $x$-derivatives
\begin{align*}
\eps\,  {\tilde{m}}_{1,0}^{(2 q)} +\frac{\eps^2}{t}\,m_{2,-1}^{(2 q)} & =  \frac{(2 q) !}{t^q} \left( \frac{\varepsilon
  t^q}{q!} q (q - 1) \langle f_0 \mathcal{L}_{0\perp}^{-1} f_0
  \mathcal{L}_{0\perp}^{-1} f_0 \rangle \langle f_1 \rangle \left( - \left\langle f_0
  \mathcal{L}_{0\perp}^{-1} f_0 \right\rangle \right)^{(q - 2)} \right. \\
  & \left.
  \phantom{\frac{(2 p) !}{t^p} (} + \frac{\varepsilon t^q}{q!} q
  \left( - \langle f_1 \mathcal{L}_{0\perp}^{-1} f_0 \rangle \right) (-
  \langle f_0 \mathcal{L}_{0\perp}^{-1} f_0 \rangle)^{q - 1} \right. \\
  & \left.
  \phantom{\frac{(2 p) !}{t^p} (} + \frac{\varepsilon t^q}{q!} q
  \left( - \left\langle f_0 \mathcal{L}_{0\perp}^{-1} f_1
  \right\rangle \right) (- \langle f_0 \mathcal{L}_{0\perp}^{-1} f_0
  \rangle)^{(q - 1)} \right. \\
  & \left.
  \phantom{\frac{(2 p) !}{t^p} (} + \frac{\varepsilon^2 t^{q -
  1}}{(q - 1) !} q \left( - \left\langle f_0 \mathcal{L}_{0\perp}^{- 2} f_0
  \right\rangle \right) (- \langle f_0 \mathcal{L}_{0\perp}^{-1} f_0
  \rangle)^{q - 1} \right. \\
  & \left.
  \phantom{\frac{(2 p) !}{t^p} (} + \frac{\varepsilon^2 t^{q -
  1}}{(q - 1) !} (q - 1) \left( - \left\langle f_0 \mathcal{L}_{0\perp}^{-1} f_0 \mathcal{L}_{0\perp}^{- 1} f_0 \mathcal{L}_{0\perp}^{-1} f_0
  \right\rangle \right) (- \langle f_0 \mathcal{L}_{0\perp}^{-1} f_0
  \rangle)^{q - 2} \right. \\
  & \left.
  \phantom{\frac{(2 p) !}{t^p} (} + \frac{\varepsilon^2 t^{q -
  1}}{(q - 1) !} 
  {{q-1}\choose{2}}
  \left\langle f_0 \mathcal{L}_{0\perp}^{-1} f_0 \mathcal{L}_{0\perp}^{-1} f_0 \right\rangle^2  (-
  \langle f_0 \mathcal{L}_{0\perp}^{-1} f_0 \rangle)^{q - 3} \right).
\end{align*}
From sequence (\ref{seq.sf5}) we obtain the non-homogenized corrections ($\alpha_\eps > 0$, $|{\bmalpha}| = 1$) to the $2 q$-th moment with $2 q + 1$ $x$-derivatives
\begin{align*}
\eps\,  {\tilde{\tilde{m}}}_{1,0}^{(2 q)} & =   (2 q) ! \frac{\varepsilon}{q!}  \left( q
  \left( \langle (g_1 \partial_y) \mathcal{L}_{0\perp}^{- 1} f_0
  \mathcal{L}_{0\perp}^{-1} f_0 \rangle (- \langle f_0
  \mathcal{L}_{0\perp}^{-1} f_0 \rangle)^{q - 1} \right. \right. \\
  & \left. \left.
  + \langle^{} f_0 \mathcal{L}_{0\perp}^{-1} (g_1 \partial_y)
  \mathcal{L}_{0\perp}^{-1} f_0 \rangle (- \langle f_0 \mathcal{L}_{0\perp}^{-1}
  f_0 \rangle)^{q - 1} \right. \right. \\
  & \left. \left.
  + (q - 1) \left\langle f_0 \mathcal{L}_{0\perp}^{-1} f_0 \mathcal{L}_{0\perp}^{-1} f_0 \right\rangle (- \langle (g_1
  \partial_y) \mathcal{L}_{0\perp}^{-1} f_0 \rangle (- \langle f_0
  \mathcal{L}_{0\perp}^{-1} f_0 \rangle)^{q - 2}) \right) \right. \\
  & \left.
  + q (q - 1) \left( - \left\langle f_0 \mathcal{L}_{0\perp}^{-1} f_0
  \right\rangle \right)^{q - 1} \langle \partial_x f_0
  \mathcal{L}_{0\perp}^{-1} f_0 \mathcal{L}_{0\perp}^{-1} f_0 \rangle \right. \\
  & \left.
  + q^2 \left( - \langle f_0 \mathcal{L}_{0\perp}^{-1} f_0 \rangle
  \right)^{q - 1} \langle f_0 \mathcal{L}_{0\perp}^{-1} \partial_x f_0
  \mathcal{L}_{0\perp}^{-1} f_0 \rangle \right. \\
  & \left.
  + q (q + 1) (- \langle f_0 \mathcal{L}_{0\perp}^{-1} f_0 \rangle)^{q -
  1} \left\langle f_0 \mathcal{L}_{0\perp}^{-1} f_0
  \mathcal{L}_{0\perp}^{-1} \partial_x f_0 \right\rangle \right. \\
  & \left.
  + q \left( q^2 - \frac{3}{2} q + \frac{1}{2} \right) (- \langle \partial_x
  f_0 \mathcal{L}_{0\perp}^{-1} f_0 \rangle)  \left( - \left\langle f_0
  \mathcal{L}_{0\perp}^{-1} f_0 \right\rangle \right)^{q - 2} \langle f_0
  \mathcal{L}_{0\perp}^{-1} f_0 \mathcal{L}_{0\perp}^{-1} f_0 \rangle \right. \\
  & \left.
  + q \left( q^2 - \frac{1}{2} q - \frac{1}{2} \right) \left( - \left\langle
  f_0 \mathcal{L}_{0\perp}^{-1} \partial_x f_0 \right\rangle \right)  (-
  \langle f_0 \mathcal{L}_{0\perp}^{-1} f_0 \rangle)^{q - 2} \langle f_0
  \mathcal{L}_{0\perp}^{-1} f_0 \mathcal{L}_{0\perp}^{-1} f_0 \rangle
  \right) ,
\end{align*}
where ${{{m}}}_{1,0}^{(2 q)}={\tilde{{m}}}_{1,0}^{(2 q)}+{\tilde{\tilde{m}}}_{1,0}^{(2 q)}$. From sequence (\ref{seq.hom3}) we obtain the leading homogenized terms ($\alpha_\eps = 0$, $|{\bmalpha}| = \frac{1}{2}$) in the $\left(2 q + 1\right)$-th moment
\begin{align*}
\sqrt{t}\,  m_{0,{\tiny{\frac{1}{2}}}}^{(2 q + 1)} & =  \sqrt{t}  \frac{(2 q + 1) !}{q!} \frac{1}{2^q} 
  (F \sigma^{2 q} + q \sigma^{2 q + 1} \partial_x \sigma).
\end{align*}
From sequence (\ref{seq.sf5}) we obtain the non-homogenized correction
terms ($\alpha_\eps > 0$, $|{\bmalpha}| = \frac{1}{2}$) in the $(2 q + 1)$-th moment
\begin{align*}
\frac{\eps}{\sqrt{t}}\, m_{1, {\tiny{-\frac{1}{2}}}}^{(2 q + 1)} & =  \frac{(2 q + 1) !}{t^{(2 q + 1) / 2}} \varepsilon
  \frac{t^q}{q!} q \langle f_0 \mathcal{L}_{0\perp}^{-1} f_0
  \mathcal{L}_{0\perp}^{-1} f_0 \rangle (- \langle f_0
  \mathcal{L}_{0\perp}^{-1} f_0 \rangle)^{q - 1}.
\end{align*}
Summarizing for $q=0$ and $q=1$ we obtain the desired expressions for the first four cumulants \eqref{e.m1}, \eqref{e.m2}, \eqref{e.m3} and \eqref{e.m4}. 
We have until now only performed the average with respect to the leading order contribution $\mu^{(0)}_{x_0}$. The following Lemma shows that this is sufficient
\begin{lemma}
The asymptotic series of cumulants $c^{(p)}$ in $\eps$ and $t$ up to the combined order $O_{\eps,t}=\frac{3}{2}$ involves averages over $\mu_{x_0}^{(0)}$ only and the linear response term $\mu_{x_0}^{(1)}$ does not contribute.
\label{Lemma_LinResp}
\end{lemma}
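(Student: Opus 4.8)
The plan is to run the very same machinery developed for $\mathcal{A}^{(p)}$ in Sections~\ref{sec.laplace}--\ref{sec.nonzero} on $\mathcal{B}^{(p)}$, and to track how the orders change when the outer $\mu^{(0)}_{x_0}$--average $\la\,\cdot\,\ra$ is replaced by $\mu^{(1)}_{x_0}(\,\cdot\,)$. Since $\mathcal{B}^{(p)}$ already enters the moment \eqref{eq.momentp2} with a prefactor $\eps$, it suffices to show that $\mathcal{B}^{(p)}/t^{p/2}$ contributes to $m^{(p)}$ only at combined order $|{\bmalpha}|\ge 1$; the prefactor then raises $\eps\,\mathcal{B}^{(p)}/t^{p/2}$ to $|{\bmalpha}|\ge 2>\frac{3}{2}$. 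First I would write the Laplace transform $\mathfrak{L}\{\mathcal{B}^{(p)}\}(s)=\mu^{(1)}_{x_0}\big((s\1-\mathcal{L}_0/\eps^2-\mathcal{L}_{12}/\eps)^{-1}z^{(p)}\big)$ and expand it exactly as in \eqref{eq.laplace1} via the operator identity \eqref{eq.opid}, so that every term is again a diagram built from the blocks $A_a$ and $B_b$ of \eqref{eq.series}, the only difference being that the outermost bracket is now $\mu^{(1)}_{x_0}$. The derivative constraint $l+1\ge p$ of \eqref{eq.lmin} and the centering cancellations survive verbatim, as they rely only on $\mathcal{L}_{12}$ carrying the single $\partial_x$ and on $p_0 f_0=\la f_0\ra=0$, not on the choice of outer functional.

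The key step exploits normalization. Because $\mu^{(\eps)}_{x_0}$ is a probability measure for every $\eps$, expanding $\mu^{(\eps)}_{x_0}(1)=1$ in \eqref{eq.linres} forces $\mu^{(1)}_{x_0}(1)=0$. Since $p_0 B=\la B\ra\, 1$, this gives
\[
\mu^{(1)}_{x_0}\big(p_0 B\big)=\la B\ra\,\mu^{(1)}_{x_0}(1)=0\qquad\text{for every }B .
\]
In the $\mathcal{A}^{(p)}$ computation the outermost resolvent collapsed to $\la C^{-1}\,\cdot\,\ra=\frac{1}{s}\la\,\cdot\,\ra$, i.e.\ the leftmost slot was the free term $A_0=\frac{1}{s}p_0$ at order $\eps^0$. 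Under $\mu^{(1)}_{x_0}$ the $p_0$--part is annihilated, so by \eqref{eq.series} the leftmost slot is forced to be a perpendicular block $A_{a_0}$ with $a_0\ge 1$, carrying $\eps^{2a_0}$; the rightmost slot acting on $z^{(p)}$ remains $A_0$ because $\mathcal{L}_0 z^{(p)}=0$. Thus each $\mathcal{B}^{(p)}$--diagram is an $\mathcal{A}^{(p)}$--diagram with an extra leftmost block $A_{a_0}$, $a_0\ge1$, adjoined, carrying the same number $l+1\ge p$ of $B$--blocks.

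It then remains to redo the order bookkeeping of \eqref{eq.comb-order} with this extra slot. Using $A_a\sim\eps^{2a}s^{a-1}$, un-absorbing the outer $A_0$ into an $A_{a_0}$ shifts a diagram by $\Delta O_\eps=2a_0$ and $\Delta O_s=a_0$, hence $\Delta O_{\eps,t}=a_0$; equivalently, the combined order \eqref{eq.comb-order} picks up the extra summand $a_0\ge1$, while the $t^{-p/2}$ normalization is identical for $\mathcal{A}^{(p)}$ and $\mathcal{B}^{(p)}$. Since an $\mathcal{A}^{(p)}$--diagram enters $m^{(p)}$ at combined order $|{\bmalpha}|\ge0$, with the minimum $0$ attained only at $p=2$, every $\mathcal{B}^{(p)}$--diagram enters at $|{\bmalpha}|\ge1$ and $\eps\,\mathcal{B}^{(p)}$ at $|{\bmalpha}|\ge2$. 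Combined with Lemma~\ref{Lemma_HigherOrder}, which rules out $p\ge5$, no $\mu^{(1)}_{x_0}$--contribution survives up to $O_{\eps,t}=\frac{3}{2}$. The hard part will be precisely this uniform bound: one must verify that the forced slot $a_0\ge1$, together with the $t^{-p/2}$ rescaling, pushes \emph{every} diagram and every $p\le4$ past $\frac{3}{2}$, with no exceptional low-$p$ sequence slipping below; the identity $\mu^{(1)}_{x_0}(1)=0$ is exactly what rules out a ``cheap'' $\eps^0$ outer contraction that would otherwise spoil the estimate.
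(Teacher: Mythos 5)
Your strategy is, in its main ingredients, the same as the paper's: expand $\mathfrak{L}\{\mathcal{B}^{(p)}\}$ with the same Duhamel/diagram machinery, use the normalization identity $\mu^{(1)}_{x_0}(1)=0$ (the paper phrases this as $\mu^{(1)}_{x_0}$ being a difference of two normalized measures) to kill every diagram whose leftmost block is $A_0=p_0/s$, and then do order bookkeeping. The paper finishes by enumerating the $\mathcal{B}$-sequences of low combined order directly, finding a single candidate at normalized order $\leq 1$, which begins with $p_0$ and therefore dies. You instead finish by transferring the order bound for $\mathcal{A}$-diagrams ("they enter at normalized order $\geq 0$") to the surviving $\mathcal{B}$-diagrams. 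That transfer step contains a genuine gap.

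The problem is your claim that "the centering cancellations survive verbatim". The \emph{left-boundary} cancellation does not: in the $\mathcal{A}$-expansion a sequence beginning with $B_1A_0$ vanishes because the outermost functional is $\la\,\cdot\,\ra$, so that $\la\mathcal{L}_1 p_0(\cdots)\ra=\la f_0\ra\,\partial_x(\cdots)=0$; once the outer functional is $\mu^{(1)}_{x_0}$ and a block $A_{a_0}$ with $a_0\geq 1$ is interposed, nothing forces $\mu^{(1)}_{x_0}\bigl(\mathcal{L}_{0\perp}^{-a_0}\mathcal{L}_1 p_0(\cdots)\bigr)$ to vanish. Hence a surviving $\mathcal{B}$-diagram can sit over an $\mathcal{A}$-structure that is itself annihilated by left-boundary centering, and for such structures the bound "normalized order $\geq 0$" was never established by the paper's enumeration (and is in fact false). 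Concretely, for $p=1$ take the diagram $A_1B_1A_0 z^{(1)}=-(\eps/s)\,\mathcal{L}_{0\perp}^{-1}f_0$: after the overall $\eps$ prefactor and the $t^{-1/2}$ rescaling it contributes $-\eps^{2}t^{-1/2}\,\mu^{(1)}_{x_0}\bigl(\mathcal{L}_{0\perp}^{-1}f_0\bigr)$ to $c^{(1)}=m^{(1)}$, a generically nonzero term of combined order exactly $\tfrac{3}{2}$, while its $\mathcal{A}$-counterpart is proportional to $\la f_0\ra=0$ and has formula order $-\tfrac{1}{2}$; the analogous thing happens for $p=3$. So your uniform bound "$\eps\,\mathcal{B}^{(p)}$ enters at $|\bmalpha|\geq 2$" is false for odd moments; the sharp bound is $\tfrac{3}{2}$. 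That weaker bound still suffices for the lemma in the sense the paper uses it (no $\mu^{(1)}_{x_0}$ contribution at combined orders $\leq 1$, everything else going into the remainders $R^{(j)}_\eps$), but your argument as written does not establish it: you must redo the counting allowing the leftmost segment of a $\mathcal{B}$-diagram to violate left-centering, or follow the paper's route of enumerating the low-order $\mathcal{B}$-sequences directly.
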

\begin{proof}
Analogously to \eqref{eq.laplace0}, the Laplace transform of 
\begin{align*}
\mathcal{B}^{(p)} (t)  = \varepsilon 
{\mu}_{x_0}^{(1)} 
\left( e^{t \left(
\frac{\mathcal{L}_0}{\varepsilon^2} +
\frac{\mathcal{L}_{12}}{\varepsilon} \right)} z^{(p)} (x) \right)
\end{align*}
can be expanded as
\begin{align*}
  \mathfrak{L} \{ \mathcal{B}^{(p)} \} (s) & =  \varepsilon
  {\mu}^{(1)}_{x_0} \left( \left( \left( s \1 -
  \frac{\mathcal{L}_0}{\varepsilon^2} \right)^{- 1} + \left( s
  \1 - \frac{\mathcal{L}_0}{\varepsilon^2} \right)^{- 1}
  \frac{\mathcal{L}_{12}}{\varepsilon} \left( s \1 -
  \frac{\mathcal{L}_0}{\varepsilon^2} \right)^{- 1} \right. \right. \\
  & \left. \left.
  \hspace*{\fill} + \left( s \1 -
  \frac{\mathcal{L}_0}{\varepsilon^2} \right)^{- 1}
  \frac{\mathcal{L}_{12}}{\varepsilon} \left( s \1 -
  \frac{\mathcal{L}_0}{\varepsilon^2} \right)^{- 1}
  \frac{\mathcal{L}_{12}}{\varepsilon} \left( s \1 -
  \frac{\mathcal{L}_0}{\varepsilon^2} \right)^{- 1} + \ldots \right)
  z^{(p)} \right).
\end{align*}
Using our diagrammatic representation, the sequences are encoded as
\begin{align*}
S(a,b)=\left(\begin{array}{ccccc}
  a_1 &  & \ldots &  & a_l\\
  & b_1 & \ldots & b_{l - 1} & 
\end{array}\right).
\end{align*}
The order of $\varepsilon$ of a given sequence is now $O_\eps=2 \sum_{i = 1}^l a_i + \sum_{j = 1}^{l - 1} (b_j - 2) + 1 = \left( 2 \sum_{i = 1}^l a_i + \sum_{j = 1}^{l - 1} b_j \right) - 2 (l - 1) + 1$ and the order of $t$ is $ O_t=- 1 - \sum_{i = 1}^l (a_i - 1) = - 1 + l - \sum_{i = 1}^l a_i$. The combined order in $t$ and $\varepsilon$ of the sequence is then $O_{\eps,t}=\sum_{i = 1}^l a_i + \sum_{j = 1}^{l + 1} b_j - l + 2$. The only sequence that can contribute in this case is
\begin{align*}
\left(\begin{array}{cccc}
    0  & & 1 &  \\
    & 1 &   & 1
  \end{array}\right)
  \otimes
  \left(\begin{array}{cccc}
        & 1 &  \\
      1 &   & 1
  \end{array}\right)^{ \mbox{\normalsize $\otimes q-2$}}
  \otimes
  \left(\begin{array}{cccc}
      & 1 &   & 0\\
    1 &   & 1 & 
  \end{array}\right) 
\end{align*}
at order $O_{\eps,t}=q+1$. Because of the left-most projection $p_0$ in the sequence, such terms are functions of $x$ only. Since ${\mu}_{x_0}^{(1)}= \lim_{\eps \rightarrow 0} (\mu^{(\eps)}-\mu_{x_0}^{(0)})/\eps$ is a difference of two normalized measures, applying ${\mu}_{x_0}^{(1)}$ to a constant in $y$ yields $0$. Therefore the cumulants up to $\Ord (\eps)$ only contain averages over the fast invariant measure $\mu_{x_0}^{(0)}$ and the average with respect to the linear response $\mu_{x_0}^{(1)}$ -- the second term in (\ref{eq.momentp2}) -- vanishes.
\end{proof}
\begin{rmk}
The linear response term $\mu^{(1)}_{x_0}$ may become relevant for higher orders $O_{\eps,t}\geqslant 2$.
\end{rmk}

Finally, we prove that higher order cumulants $c^{(p)}$ with $p\geqslant 5$ do not contribute at $\mathcal{O} (\varepsilon)$, and we can be content finding expressions for the first 4 cumulants to determine the Edgeworth expression. 
\begin{lemma}
$c^{(p)}=\mathcal{O}(\eps^\frac{3}{2})$ for $p\geqslant 5$.
\label{Lemma_HigherOrder}
\end{lemma}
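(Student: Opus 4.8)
I would prove the sharper statement $c^{(p)}=\Ord(\eps^{(p-2)/2})$ for every $p\ge 2$; since $(p-2)/2\ge\tfrac32$ as soon as $p\ge 5$, this yields the lemma. Write $c^{(p)}=t^{-p/2}\,\kappa_p\big(\hat x(t)\big)$, with $\kappa_p$ the $p$-th cumulant of the increment $\hat x(t)=x(t)-x_0$. On the intermediate time scale the slow variable stays frozen at $x_0$, so $\hat x(t)=\tfrac1\eps\int_0^t f_0\,\dif s$ up to corrections carried by $f_1,g_1$ (and by $\mu^{(1)}_{x_0}$) that only raise the order in $\eps$. Multilinearity of the cumulant then gives
\begin{equation*}
\kappa_p\big(\hat x(t)\big)=\frac{1}{\eps^{p}}\int_0^t\cdots\int_0^t \kappa\big(f_0(s_1),\dots,f_0(s_p)\big)\,\dif s_1\cdots \dif s_p+\Ord(\eps^{p-1}t),
\end{equation*}
where $\kappa(\,\cdot\,)$ denotes the \emph{truncated} (connected) $p$-point correlation along the fast flow. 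The mixing hypothesis ensures this connected correlation is integrable and decays whenever the times separate into two groups; after passing to the fast time $\tau=s/\eps^2$ the integrand is thus concentrated in an $\Ord(1)$ window about the temporal diagonal, leaving a single free time of length $t/\eps^2$. Hence $\kappa_p(\hat x(t))=\Ord(\eps^{p}\cdot t/\eps^{2})=\Ord(\eps^{p-2}t)$, and $c^{(p)}=\Ord(\eps^{p-2}t^{1-p/2})=\Ord(\eps^{(p-2)/2})$ at $t=\eps$.

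Within the semigroup/sequence framework of the Appendix the same fact reads as follows. After the moment--cumulant inversion $c^{(p)}=\sum_{\pi}(-1)^{|\pi|-1}(|\pi|-1)!\prod_{B\in\pi}m^{(|B|)}$, the factorization of each sequence at its internal projections $A_0=\tfrac1s p_0$ — where the fast correlation is cut and the contribution splits into a product of functions of $x$ — together with the signed sum over partitions $\pi$ realizes the linked-cluster cancellation of the disconnected (Gaussian) pieces, leaving only \emph{leg-connected} sequences, i.e.\ those tying all $p$ factors of $z^{(p)}=(x-x_0)^p$ into a single cluster. Such a cluster needs at least $p-1$ correlation links $\mathcal{L}_{0\perp}^{-1}$; as $\sum_{i=1}^{l}a_i$ counts these links and $\sum_{j=1}^{l+1}b_j\ge l+1$ (because $b_j\in\{1,2\}$), one reads off $O_{\eps,t}=\sum_{i=1}^{l}a_i+\sum_{j=1}^{l+1}b_j-l-1\ge p-1$, hence $|\bmalpha|=O_{\eps,t}-p/2\ge(p-2)/2$; this is consistent with the leading terms of $c^{(3)}$ and $c^{(4)}$ sitting exactly at $O_{\eps,t}=p-1$. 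A fully explicit alternative, available because Section~\ref{sec.calcmoments} already supplies every $m^{(p)}$ through $\Ord(\eps)$, is to substitute those formulas into the inversion: the $\Ord(1)$ terms cancel for all $p\ge 3$ because the leading moments are exactly Gaussian, $m^{(2q)}_0=(2q-1)!!\,\sigma^{2q}$ and $m^{(2q+1)}_0=0$, so Isserlis' theorem kills every cumulant of order $\ge 3$ at $\Ord(1)$; the residual $\Ord(\sqrt\eps)$ and $\Ord(\eps)$ pieces then cancel for $p\ge 5$ through double-factorial identities (one checks directly, for instance, that the $\Ord(\sqrt\eps)$ part of $c^{(5)}$ vanishes).

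The hard part is the combinatorial step: showing rigorously that the cumulant isolates precisely the leg-connected sequences and that connectedness forces at least $p-1$ links. The delicacy is that the $\partial_x$ derivatives carried by $\mathcal{L}_1,\mathcal{L}_2$ act \emph{across} the factorizing projections $A_0$ and can re-link clusters that the fast-variable cut would otherwise separate — as already seen in $c^{(3)}_{0,\frac12}$, which is connected yet carries two time integrations ($O_t=2$). One must therefore track the slow-variable ($\partial_x$) linking separately from the fast-variable links counted by $\sum_i a_i$ and verify that together they still leave $\ge p-1$ links in every connected sequence. Lemma~\ref{Lemma_LinResp} is what lets one ignore the $\mu^{(1)}_{x_0}$ contributions at this order, so that only $\mu^{(0)}_{x_0}$-averages need be accounted for.
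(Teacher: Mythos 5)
Your primary route --- proving the sharper bound $c^{(p)}=\Ord(\eps^{(p-2)/2})$ by a linked-cluster argument --- is not the paper's proof, and as written it contains a genuine gap that you yourself flag but do not close. First, your opening display is incorrect as stated: the corrections to the frozen-$x$ approximation coming from the $x$-dependence of $f_0$ (not from $f_1$, $g_1$ or $\mu^{(1)}_{x_0}$) are \emph{not} $\Ord(\eps^{p-1}t)$. Already for $p=3$ the unscaled cumulant contains the term $t^2\,c^{(3)}_{0,\frac12}$ of \eqref{e.c3a}, built from two near-diagonal pair correlations re-linked through a slow derivative $\partial_x$; at $t=\eps$ this is exactly the same order after scaling, namely $\eps^{1/2}$, as the genuinely connected contribution $\eps t\, c^{(3)}_{1,-\frac12}$ of \eqref{e.c3b}. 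So decay of the connected $p$-point function of $f_0$ along the fast flow alone cannot give the bound. Second, the diagrammatic repair --- that the moment--cumulant inversion retains only ``leg-connected'' sequences and that every such sequence still satisfies $\sum_i a_i \ge p-1$ even when clusters are re-joined by $\partial_x$ acting across the projections $A_0$ --- is precisely the statement you defer as ``the hard part''; without it the inequality $O_{\eps,t}\ge p-1$, and hence the lemma, is not established. A proof of that combinatorial claim would indeed be a stronger, more structural result than the lemma itself, but the proposal does not deliver it.

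What does work is your ``fully explicit alternative'', and that is in fact exactly the paper's proof: insert the general-$q$ moment formulas of Section~\ref{sec.calcmoments} into the recursion $c^{(p)}=m^{(p)}-\sum_{m=1}^{p-1}{{p-1}\choose{m-1}}\,m^{(p-m)}c^{(m)}$ and verify the cancellations order by order. At $\Ord(1)$ the leading moments \eqref{eq.m02q} are Gaussian, so all cumulants with $p\ge3$ vanish there (your Isserlis observation, which the paper carries out via the factorial identity $c^{(2q)}_0=0$ for $q\ge 2$); at the next orders the factorial identities give $\delta^{(1)}c^{(2q)}=0$ for $q\ge 3$ and $c^{(2q+1)}_{0,\frac12}+c^{(2q+1)}_{1,-\frac12}=0$ for $q\ge 2$, which is all that is needed since only terms up to $\Ord(\eps)$ matter. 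You correctly name this mechanism and the role of Lemma~\ref{Lemma_LinResp}, but you do not carry out the computations (``one checks directly\dots''). As it stands, then, the proposal is an outline of the paper's argument plus an unfinished, more ambitious alternative, rather than a complete proof of the lemma.
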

\begin{proof}
Using the recursion formula for cumulants
\begin{align*}
  c^{(p)} & =  m^{(p)} - \sum_{m = 1}^{p - 1} 
  {{p - 1}\choose{m - 1}}
  m^{(p - m)} c^{(m)} ,
\end{align*}
we have by recursion at order $\Ord(1)$ for $q \geqslant 2$ that
\begin{align*}
  c^{(2 q)} & =  m_0^{(2 q)} - 
  {{2q - 1 }\choose{1}} m_0^{(2 q - 2)} c_0^{(2)}
  +\mathcal{O} (\eps) = c^{(2 q)}_0 +\mathcal{O} (\eps) ,
\end{align*}
where, due to \eqref{eq.m02q},
\begin{align*}
  c^{(2 q)}_0 & =  \frac{(2 q) !}{q!} \frac{\sigma^{2 q}}{2^q} - \frac{(2 q -
  1) !}{(2 q - 2) !} \frac{(2 q - 2) !}{(q - 1) !} \frac{\sigma^{2 q -
  2}}{2^{q - 1}} \sigma^2 = 0.
\end{align*}
Similarly, for the combined order $1$ contribution to the even cumulant $ \delta^{(1)} c^{(2 q)} = c^{(2 q)}_{1,0} + c^{(2 q)}_{0,1} +  c^{(2 q)}_{2,-1}$, we find that for $q \geqslant 3$
\begin{align}
  \delta^{(1)} c^{(2 q)} & =  \delta^{(1)} m^{(2 q)}
  - {{2q - 1}\choose{0}} \left( m_{0,\frac{1}{2}}^{(2 q - 1)} + m_{1,-\frac{1}{2}}^{(2 q - 1)} \right) c_0^{(1)} \nonumber \\
  &- {{2q - 1}\choose{1}}  (m_0^{(2 q - 2)} \delta^{(1)} c^{(2)} + \delta^{(1)} m^{(2 q - 2)}  c_0^{(2)}) \nonumber \\
  &- {{2q - 1}\choose{2}} \left( m_{0,\frac{1}{2}}^{(2 q - 3)} + m_{1,-\frac{1}{2}}^{(2 q - 3)} \right) c_0^{(3)} 
  - {{2q - 1}\choose{3}} m_0^{(2  q - 4)} \delta^{(1)} c_{}^{(4)} = 0 , 
  \label{eq.homdeltacum}
\end{align}
where $\delta^{(1)} m^{(2 q)} = m^{(2 q)}_{1,0} +  m^{(2 q)}_{0,1} +  m^{(2 q)}_{2,-1}$.
For the odd cumulants $c^{(2 q + 1)}$ we find recursively that the $\mathcal{O} \left( \sqrt{\eps} \right)$ contribution for $q \geqslant 2$ is
\begin{align*}
  c_{0,\frac{1}{2}}^{(2q+1)} + c_{1,{\tiny{-\frac{1}{2}}}}^{(2 q + 1)} & =  \left( m_{0,\frac{1}{2}}^{2 q + 1} + m_{1,-\frac{1}{2}}^{2 q + 1} \right) - m_0^{2 q} c_{0,\frac{1}{2}}^{(1)} 
  - (2 q) \left( m_{0,\frac{1}{2}}^{(2 q + 1)} + m_{1,-\frac{1}{2}}^{(2 q + 1)} \right) c_0^{(2)} \\
  &- {{2q}\choose{2}} m_0^{(2 q)} \left( c_{0,\frac{1}{2}}^{(3)} + c_{1,-\frac{1}{2}}^{(3)} \right) 
  = 0.
\end{align*}
\end{proof}




\end{document}